\theoremstyle{definition}
\newtheorem{assumption}{Assumption}
\newtheorem{theorem}{Theorem}
\newtheorem{definition}{Definition}
\newtheorem{lemma}{Lemma}
\newtheorem{remark}{Remark}
\newcommand{\vast}{\bBigg@{3.2}}
\newcommand{\Vast}{\bBigg@{4.5}}
\begin{document}
	
	\title{Trajectory Generation by Chance Constrained Nonlinear MPC with Probabilistic Prediction}

	\author{\IEEEauthorblockN{Xiaoxue Zhang,
		Jun Ma,
		Zilong Cheng,
		Sunan Huang,
		Shuzhi Sam Ge,~\IEEEmembership{Fellow,~IEEE}, and
		Tong Heng Lee}
	\thanks{X. Zhang, Z. Cheng, S. S. Ge, and T. H. Lee are with the NUS Graduate School for Integrative Sciences and Engineering, National University of Singapore, 119077, and the Department of Electrical and Computer Engineering, National University of Singapore, Singapore 117583
		(e-mail: xiaoxuezhang@u.nus.edu; zilongcheng@u.nus.edu; samge@nus.edu.sg; eleleeth@nus.edu.sg).}
	\thanks{J. Ma is with the Department of Mechanical Engineering, University of California, Berkeley, CA 94720 USA (email: jun.ma@berkeley.edu).}
	\thanks{S. Huang is with the Temasek Laboratories, National University of Singapore,  Singapore, 117411 (e-mail: tslhs@nus.edu.sg).}}	
	
	\markboth{IEEE Transactions on Cybernetics}
	{X. Zhang \MakeLowercase{\textit{et al.}}}
	
	\maketitle

\begin{abstract}
Continued great efforts have been dedicated towards high-quality trajectory generation based on optimization methods, however, most of them do not suitably and effectively consider the situation with moving obstacles; and more particularly, the future position of these moving obstacles in the presence of uncertainty within some possible prescribed prediction horizon.
{To cater to this rather major shortcoming,
this work shows how a variational Bayesian Gaussian mixture model (vBGMM) framework can be employed to predict the future trajectory of moving obstacles;} and then with this methodology, a trajectory generation framework is proposed which will efficiently and effectively address trajectory generation in the presence of moving obstacles, and also incorporating presence of uncertainty within a prediction horizon.
{In this work, the full predictive conditional probability density function (PDF) with mean and covariance is obtained}, and thus a future trajectory with uncertainty is formulated as a collision region represented by a confidence ellipsoid. {To avoid the collision region, chance constraints are imposed to restrict the collision probability, and subsequently a nonlinear MPC problem is constructed with these chance constraints.}
It is shown that the proposed approach is able to predict {the future position of the moving obstacles effectively}; and thus based on the environmental information of the probabilistic prediction, it is also shown that the timing of collision avoidance can be earlier than the method without prediction. The tracking error and distance to obstacles of the trajectory with prediction are smaller compared with the method without prediction.
\end{abstract}

\begin{IEEEkeywords}
	Variational inference, Gaussian mixture model, trajectory prediction, chance constraint, model predictive control.
\end{IEEEkeywords}

\section{Introduction}
\IEEEPARstart{T} {rajectory} generation is
certainly one of the critical component technologies
for autonomous robots~\cite{mora2013teleoperation, huang2019collision,he2018iterative};
and it involves
not only a path planning problem to find a sequence of valid configurations
that moves a mobile robot, but also refers to
the larger problem regarding
how to move along the path in various real-world practical situations.
Pertinent to addressing these mathematical formulations
involving such substantial and possibly difficult
equality and inequality
constraints~\cite{kong2019adaptive,he2020admittance,ma2019parameter,ma2020optimal},
{it is noteworthy that
model predictive control (MPC) is an effective technique
in addressing various constraints as part of
the control synthesis problem~\cite{MPC_IFAC,li2015trajectory,zhou2017real}. However, various drawbacks exist, such as requiring the more restrictive assumption that the unmanned aerial vehicle (UAV) moves on a 2D surface instead of a 3D environment~\cite{NMPC_ICECC}, and lack of consideration of certain environmental factors (obstacles and their motion)~\cite{MPC_IFAC}.}
Overall too,
the prediction of surrounding moving obstacles is a rather challenging problem due to
a large number of factors that influence the future states of robots.

In the existing literature,
various approaches are applied in
a typical trajectory prediction task,
such as Bayesian network~\cite{vgmm2018}, hidden Markov models (HMMs)~\cite{HMM_vehicle2016},
Monte Carlo simulation~\cite{MonteCarlo2019}, Kalman filters~\cite{Kalman2016},
long-short temporal memory (LSTM)~\cite{SocialLSTM_2016,LSTM2019},
generative adversarial networks (GANs)~\cite{assens2018pathgan,roy2019vehicle,SophieCVPR_2019_excellent}, etc.
While all these efforts indicate
great possibilities and promise,
yet at present stages of development,
various drawbacks exist;
such as certain methodologies requiring rather prohibitively high
computational resources
(memory-bandwidth computation)
to train these networks suitably fast,
and also difficulties with
the gap between parameter space and function space.
{Some recent research works, on the other hand, reveal the 
rather significant improvements and advantages with the incorporation and use of a Bayesian network approach~\cite{vgmm2018}}. Here, the probabilistic method gives a probability distribution over the training trajectories, and it additionally provides the conditional distribution of the future horizon given partial history trajectory snippets~\cite{vgmm2012}.
This method also considers a degree of uncertainty for future predictions.

{With all of the above descriptions as a back-drop, in this work, we develop a chance-constrained nonlinear MPC approach to generate the suitable required collision-free trajectory.} We formulate the predicted distribution based on a variational Bayesian Gaussian mixture model (vBGMM) framework as probabilistic chance constraints for the MPC problem; and further solve the resulting nonlinear MPC problem characterizing the collision-free trajectory generation task.

{The key significant contribution of this paper is essentially twofold: both the predicted uncertainty and potential collision are considered
during the prediction horizon in the nonlinear MPC problem.
Therefore with this new and significant development here,
our resulting solution simultaneously ensures that firstly,
the risk of a collision caused by parametric uncertainty and sensor noise is greatly decreased;
and secondly, the required suitable collision-free trajectory can also be generated in advance.
It is noteworthy that
compared to the existing MPC-based methods
without prediction of moving obstacles,
our proposed approach can significantly and effectively improve the quality of the generated trajectories.}

The remainder of this paper is organized as follows.
Section~\ref{section:Prediction} firstly lays out the details of the key basis of our proposed process of trajectory prediction by using the vBGMM framework.
Section~\ref{section:proposed method} then presents the formulation and development of
our proposed trajectory generation approach (with prediction) to efficiently and effectively address trajectory generation in the presence of moving obstacles. Here, uncertainty is incorporated as chance constraints, and an appropriate nonlinear MPC problem is formulated with these constraints.
Then in Section~\ref{section:case study}, a case study on the trajectory generation problem for a quadcopter is given.
Finally, the conclusion of this work is given in Section~\ref{section:conclusion}.

\section{Probabilistic Prediction}
\label{section:Prediction}
The purpose of this section is to show the prediction of the future trajectories for moving obstacles. {Since this probabilistic modelling method requires a probability density function (PDF)}, we can infer the approximated PDF based on the training data. In this section, a joint distribution of history and future data in the training trajectories will be inferred based on vBGMM. Then, the conditional PDF of the future trajectory of test data can be obtained by computing the statistical parameters.

\subsection{Trajectory Representation}
Chebyshev decomposition of trajectories is applied to represent the characteristics of trajectories. The Chebyshev polynomial of a degree of $n$ is defined as~\eqref{eq:Chebyshev}.
\begin{IEEEeqnarray*}{rCl}
	\label{eq:Chebyshev}
	{T_0(x) }& = & {1} \\
	{T_1(x) }& = & {x}\\
	{T_2(x) } & = & {2x^2-1} \\
	& \vdots & \\
	{T_{n+1}(x) } & = & {2xT_n(x) - T_{n-1}(x)}  \yesnumber
\end{IEEEeqnarray*}
The Chebyshev polynomial $T_n$ is orthogonal in the interval $[-1,1]$ and has $n$ zeros in this interval, which means the error between the function we need to approximate and the Chebyshev approximation is close to the optimal $n$th-degree polynomial. To approximate any arbitrary function $f(x)$, the Chebyshev coefficients $a_n$ can be calculated by using \eqref{eq:coefficients}.
\begin{IEEEeqnarray}{ll}
	\label{eq:coefficients}
	a_n = \frac{2}{N} \sum\limits_{k=0}^{N-1}f(x_k)T_n(x_k)
\end{IEEEeqnarray}
where $x_k$ are $N$ zeros of $T_N(x)$. $\boldsymbol a = \begin{bmatrix} a_0 & a_1 & \cdots & a_N \end{bmatrix}$
will be used as input feature to train and predict the probabilistic distribution.
Denote $x,y,z$ as the standard Cartesian coordinates and $v,\theta,\phi$ the spherical coordinates.
For appropriately better capture of
the notation for
the rotation in a trajectory, we use $v,\theta,\phi$ to characterize the trajectory.

\subsection{Variational Bayesian Inference}
The probabilistic trajectory prediction can be formulated as an estimation of the conditional distribution of predicted positions given the history positions of the moving obstacles. This conditional distribution is given by
\begin{IEEEeqnarray}{l}
	\operatorname{Pr}(\boldsymbol{a}_f \mid \boldsymbol{a}_h) = \operatorname{Pr}(a_{v,f}, a_{\theta,f}, a_{\phi,f} \mid a_{v,h},a_{\theta,h}, a_{\phi,h})
\end{IEEEeqnarray}
where $\boldsymbol{a}_h$ and $\boldsymbol{a}_f$ are the Chebyshev approximation coefficients vectors corresponding to history trajectories and future trajectories. All of the subscripts $\cdot_f$ and $\cdot_h$ denote the parameters regarding the future and history, respectively.

First, the joint distribution $\operatorname{Pr}(\boldsymbol{a}_f, \boldsymbol{a}_h)$ can be modeled by GMM which comprises a number of component Gaussian functions to provide a multi-model density function. Some previous researches apply some maximum likelihood solutions or 3-$\sigma$ confidence ellipses to predict the future trajectory~\cite{NMPC+ellipsoid_2017, ellipsoid_2006}. The Bayesian methodology, i.e., variational inference, can be used to estimate this GMM and provide a lower bound on the approximation error~\cite{nasios2006variational}. Variational Bayesian inference has outstanding generation performance and can conquer some shortcomings of these previous methods, such as singularity in the covariance matrix, overfitting, sensibility to the outliers. In this method, the whole conditional predicted distribution can be obtained given the prior distributions of the parameters.
In the Bayesian setting, we consider a prior on the model parameters and aim to infer their posterior distribution as shown in~\eqref{eq:prior}.
\begin{IEEEeqnarray*}{rCl}
	\label{eq:prior}
	\operatorname{Pr}(\boldsymbol{\pi}) & = &{\operatorname{Dir}\left(\boldsymbol{\pi} | {\alpha}_{0}\right)} \label{eq:sub1:prior} \\  \IEEEyesnumber \IEEEyessubnumber
	{} & = & {C\left({\alpha}_{0}\right) \prod_{k=1}^{K} \pi_{k}^{\alpha_{0}-1}} \\
	{\operatorname{Pr}(\boldsymbol{\mu}, \boldsymbol{\Lambda})} & =& {p(\boldsymbol{\mu} | \boldsymbol{\Lambda}) p(\boldsymbol{\Lambda}) } \label{eq:sub2:prior} \\
	{} & =& {\prod_{k=1}^{K} \mathcal{N}\left(\boldsymbol{\mu}_{k} \big| \mathbf{m}_{0},\left(\beta_{0} \boldsymbol{\Lambda}_{k}\right)^{-1}\right) \mathcal{W} \left(\boldsymbol{\Lambda}_{k} | \mathbf{W}_{0}, \nu_{0}\right)} \\ \IEEEnonumber \IEEEyessubnumber
\end{IEEEeqnarray*}
where $K$ is the number of mixture components, $\operatorname{Dir}$ means the Dirichlet distribution, which is used as the conjugate prior of the multinomial distribution of weights $\boldsymbol\pi$, where $\alpha_0$ and $C(\alpha_0)$ are the set of the concentration parameters and the normalization constant of the Dirichlet distribution, respectively. The parameter $\alpha_0$ can be considered as the prior number of observations connected to the components of the mixture model. If the value of $\alpha_0$ is larger, the posterior distribution is more influenced by the prior instead of the data. $\mathcal{N}$ and $\mathcal W$ denote the Normal and Wishart distribution. An independent Normal-Wishart distribution is used as the conjugate prior distribution when both means and precision of Gaussian mixture components $\boldsymbol{\mu}, \boldsymbol{\Lambda}$ are unknown, as shown in~\eqref{eq:sub2:prior}. $\mathbf{W}_{0}, \mathbf{m}_0$ are the initial priors for precisions and means, and $\beta_0, \nu_0$ are the initial scaling factor and degree of freedom of the Wishart distribution, respectively.

It seems infeasible to evaluate the posterior distribution because the dimensionality of the latent space is too high, and the posterior distribution is too complex to have an analytically tractable solution. {Therefore, variational Bayesian inference is useful to obtain the approximated parameters of the posterior distribution. Similar to \cite{vgmm2012}, we also use variational Bayesian expectation-maximization algorithm~\cite{bishop2006pattern} to infer the posterior distribution and obtain the approximated parameters of this distribution.} The predictive density distribution for a new variable $\boldsymbol a$ of the given observed data is a mixture of Student's $t$-distribution~\cite{bishop2006pattern}, which can be calculated by \eqref{eq:pred}.
\begin{IEEEeqnarray}{rCl}
	\label{eq:pred}
	{\operatorname{Pr}(\boldsymbol{a}_f , \boldsymbol{a}_h)} & = & {\frac{\sum_{k=1}^{K} \alpha_{k} T\left(\boldsymbol{a}_f , \boldsymbol{a}_h | \mathbf{m}_{k}, \boldsymbol{L}_{k}, \nu_{k}+1-D\right)} {\sum_{k=1}^{K} \alpha_k}} \nonumber \\
	{\boldsymbol{L}_{k}} & = & {\frac{\left(\nu_{k}+1-D\right) \beta_{k}}{1+\beta_{k}} \mathbf{W}_{k}}
	\yesnumber
\end{IEEEeqnarray}
where $D$ is the dimension of data, $T$ is the Student's $t$-distribution with mean $\boldsymbol{m}_k$ and precision $\boldsymbol{L}_k$ of the $k$th component, $\alpha_k, \beta_k, \nu_k$ are the mixing parameter, scaling factor and degree of freedom of the $k$th component, respectively. The variational lower bound can be used to determine the posterior distribution over $K$ components in the mixture model. A suitable value of $K$ can be determined by treating the mixing coefficients $\boldsymbol\pi$ as parameters and making point estimation by maximizing the lower bound with respect to $\boldsymbol\pi$, rather than computing the distribution by fully Bayesian rule. Hence, re-estimation of the $\boldsymbol\pi$ executes after updating the factorized distribution over other parameters except for $\pi_k$ will lead to sparsity given any initial value of $K$.

At this point, as part of our development to show the prediction of {the future trajectories for moving obstacles}, it is useful to state the following intermediate result on the density probability of the predicted future trajectory (of the observed history trajectory).

\begin{lemma}
\label{lemma1new}
Based on this joint distribution $\operatorname{Pr}(\boldsymbol{a}_f , \boldsymbol{a}_h)$, the density probability of predicted future trajectory of the observed history trajectory can be calculated by computing the conditional distribution $\operatorname{Pr}(\boldsymbol{a}_f \mid \boldsymbol{a}_h)$ as \eqref{eq:condition distr}.
\begin{IEEEeqnarray*}{rCl}
	\label{eq:condition distr}
	\operatorname{Pr}\left(\boldsymbol{a}_{f} | \boldsymbol{a}_{h}\right) & = & \frac{\sum\limits_{k=1}^{K} \tilde{\alpha}_{k} T\left(\boldsymbol{a}_{f} | \boldsymbol{a}_{h}, {\mathbf{m}}_{k, f|h}, {\mathbf{L}}_{k, f|h}, \nu_{k}+1\right)}{\sum_{k=1}^{K} \tilde{\alpha}_k}  \\
	{\tilde{\alpha}_{k}} & {=} & {\frac{\alpha_{k} T\left(\boldsymbol{a}_{h} | \mathbf{m}_{k,h}, \mathbf{L}_{k,h},{\nu_k + 1 -D}\right)}{\sum\limits_{j=1}^{K} \alpha_{j} T\left(\boldsymbol{a}_{h} | \mathbf{m}_{j,h}, \mathbf{L}_{j,h}, {\nu_k + 1 -D}\right)}}  \\
	{{\mathbf{m}}_{k,f|h}} & {=}& {\mathbf{m}_{k,f}+\boldsymbol{\Sigma}_{k,fh} \boldsymbol{\Sigma}_{k, hh}^{-1}\left(\boldsymbol{a}_{h}-\mathbf{m}_{k, h}\right)} \\
	\boldsymbol{L}_{k,f|h}^{-1} &	= & {\left(1+{\left(\boldsymbol{a}_{h}-\mathbf{m}_{k,h}\right)}^{T} \frac{\mathbf{\Sigma}_{k,hh}^{-1}}{{\nu}_{k, f|h}} {\left(\boldsymbol{a}_{h}-\mathbf{m}_{k,h}\right)}\right)} \\
	{} & & \ {\left(\boldsymbol\Sigma_{k,ff}-\boldsymbol\Sigma_{k,fh} \boldsymbol\Sigma_{k,hh}^{-1} \boldsymbol\Sigma_{k,hf}\right)} \frac{{\nu_k + 1 -D}}{{\nu}_{k}-1} \\
	\boldsymbol{\Sigma}_{k,f|h} & = & \frac{{\nu_k -1}}{{\nu}_{k}+1} \boldsymbol{L}_{k,f|h}^{-1}   \yesnumber
\end{IEEEeqnarray*}
where the notation $\cdot_{f|h}$ means the corresponding parameters in the conditional distribution of future data given history data, $\boldsymbol{m}_k = \begin{bmatrix} \boldsymbol{m}_{kh} \\ \boldsymbol{m}_{kf} \end{bmatrix}$, $\boldsymbol{\Sigma}_k = \begin{bmatrix} \boldsymbol{\Sigma}_{k,hh} & \boldsymbol{\Sigma}_{k,hf} \\ \boldsymbol{\Sigma}_{k,fh} & \boldsymbol{\Sigma}_{k,ff} \end{bmatrix}$ are the partition of means and covarainces of this mixture student's $t$-distribution. The subscripts $\cdot_{hh}, \cdot_{hf}, \cdot_{fh}, \cdot_{ff}$ represent the parameter with respect to $\cdot_{\boldsymbol{a}_{h},\boldsymbol{a}_{h}}, \  \cdot_{\boldsymbol{a}_{h},\boldsymbol{a}_{f}}, \  \cdot_{\boldsymbol{a}_{f},\boldsymbol{a}_{h}}, \  \cdot_{\boldsymbol{a}_{f},\boldsymbol{a}_{f}}$, respectively.

\end{lemma}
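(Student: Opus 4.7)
My plan is to derive \eqref{eq:condition distr} by direct application of the definition of conditional density, $\operatorname{Pr}(\boldsymbol{a}_f \mid \boldsymbol{a}_h) = \operatorname{Pr}(\boldsymbol{a}_f, \boldsymbol{a}_h)/\operatorname{Pr}(\boldsymbol{a}_h)$, starting from the mixture of multivariate Student's $t$-distributions in \eqref{eq:pred}. The strategy is to carry out marginalization and conditioning component by component, exploiting two standard facts about the multivariate Student's $t$: (i) the marginal over a subset of coordinates is again a Student's $t$ with the same degrees of freedom and with mean/scale read off the corresponding sub-blocks; and (ii) the conditional with respect to a subset of its coordinates is again a Student's $t$ whose mean shifts affinely via $\boldsymbol{\Sigma}_{k,fh}\boldsymbol{\Sigma}_{k,hh}^{-1}$, whose scale matrix is the Schur complement $\boldsymbol{\Sigma}_{k,ff}-\boldsymbol{\Sigma}_{k,fh}\boldsymbol{\Sigma}_{k,hh}^{-1}\boldsymbol{\Sigma}_{k,hf}$ inflated by the Mahalanobis factor $1+(\boldsymbol{a}_h-\boldsymbol{m}_{k,h})^T\boldsymbol{\Sigma}_{k,hh}^{-1}(\boldsymbol{a}_h-\boldsymbol{m}_{k,h})/\nu_{k,f|h}$, and whose degrees of freedom are increased by the dimension of the conditioning variable.

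First I would partition each component's mean and covariance in block form according to the history/future split, writing $\boldsymbol{m}_k$ as the stack of $\boldsymbol{m}_{k,h},\boldsymbol{m}_{k,f}$ and $\boldsymbol{\Sigma}_k$ as the four-block matrix with entries $\boldsymbol{\Sigma}_{k,hh},\boldsymbol{\Sigma}_{k,hf},\boldsymbol{\Sigma}_{k,fh},\boldsymbol{\Sigma}_{k,ff}$. Since finite mixing commutes with marginalization, I would write $\operatorname{Pr}(\boldsymbol{a}_h)$ as the same-weight mixture of per-component marginals $T(\boldsymbol{a}_h\mid\boldsymbol{m}_{k,h},\boldsymbol{L}_{k,h},\nu_k+1-D)$ using property (i). Dividing the joint mixture by this denominator and multiplying and dividing each numerator term by $T(\boldsymbol{a}_h\mid\boldsymbol{m}_{k,h},\boldsymbol{L}_{k,h},\nu_k+1-D)$, property (ii) splits each per-component joint density into the product of its marginal in $\boldsymbol{a}_h$ and the conditional Student's $t$ in $\boldsymbol{a}_f$ with the parameters $\boldsymbol{m}_{k,f|h}$, $\boldsymbol{L}_{k,f|h}$, and degrees of freedom $\nu_k+1$ stated in the lemma. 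Re-collecting terms yields a mixture of per-component conditional Student's $t$-distributions with the Bayes-rule mixing weights $\tilde\alpha_k$ displayed in \eqref{eq:condition distr}, while the closing identity $\boldsymbol{\Sigma}_{k,f|h}=(\nu_k-1)/(\nu_k+1)\cdot\boldsymbol{L}_{k,f|h}^{-1}$ is a routine precision-to-covariance conversion under the Student's $t$ parametrization in use.

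The main obstacle is the careful verification of property (ii): unlike the Gaussian case, the Student's $t$ density has the polynomial form $(1+q/\nu)^{-(\nu+d)/2}$ rather than an exponential, so one cannot simply complete the square to separate the joint quadratic form into marginal and conditional pieces. I would handle this by decomposing the joint Mahalanobis form as $q_h+q_{f|h}$, where $q_h$ is the pure marginal quadratic in $\boldsymbol{a}_h$ and $q_{f|h}$ is the Schur-complement quadratic in $\boldsymbol{a}_f-\boldsymbol{m}_{k,f|h}$, and then applying the algebraic identity $1+(q_h+q_{f|h})/\nu=(1+q_h/\nu)\bigl(1+q_{f|h}/(\nu+q_h)\bigr)$. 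Combined with matching of the normalizing constants via the Gamma-function recursion, this factors the per-component joint density into the marginal in $\boldsymbol{a}_h$ times a Student's $t$ in $\boldsymbol{a}_f$ with degrees of freedom increased by $\dim(\boldsymbol{a}_h)$ and scale matrix rescaled by the Mahalanobis factor, producing exactly the conditional parameters claimed. The mixture-level result then follows immediately from the bookkeeping described above.
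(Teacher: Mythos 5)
Your proposal is correct, and it takes a genuinely different route from the paper. The paper's proof works through the mixture (normal--gamma) representation of the multivariate Student's $t$: it writes down the characteristic function of the stacked vector $\boldsymbol{X}$ in terms of the Macdonald function $K_{\nu/2}$, reads off from it that the marginal $\boldsymbol{a}_h$ is again a Student's $t$ with sub-block parameters, and then simply \emph{asserts} that the conditional of $\boldsymbol{a}_f$ given $\boldsymbol{a}_h$ is the Student's $t$ with the stated parameters, from which the mixture-level formula follows. You instead work directly at the density level, factoring each component's joint density into marginal times conditional via the decomposition $q = q_h + q_{f|h}$ of the Mahalanobis form and the identity $1+(q_h+q_{f|h})/\nu=(1+q_h/\nu)\bigl(1+q_{f|h}/(\nu+q_h)\bigr)$, with the Gamma-function recursion matching the normalizing constants. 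What your approach buys is precisely the step the paper leaves unproved: the explicit verification that the conditional is again a Student's $t$ with the affinely shifted location, Schur-complement scale inflated by the Mahalanobis factor, and incremented degrees of freedom; your observation that one cannot ``complete the square'' as in the Gaussian case and must instead use the multiplicative splitting of the $(1+q/\nu)$ kernel is exactly the right technical point. The characteristic-function route is slicker for establishing closure of the family under marginalization but gives no leverage on the conditional, which is why the paper's proof ends with an assertion. One small bookkeeping caveat: with the joint component carrying $\nu_k+1-D$ degrees of freedom, your rule ``increase by the dimension of the conditioning variable'' yields $\nu_k+1-D+\dim(\boldsymbol{a}_h)$, which matches the paper's stated $\nu_k+1$ only under the paper's particular dimension conventions; you should make the dimension accounting explicit when writing this up, but this is a notational issue in the lemma statement rather than a gap in your argument.
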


\begin{proof}
Define $\boldsymbol{X}=\begin{bmatrix}	\boldsymbol{a}_h \\ \boldsymbol{a}_f  \end{bmatrix}$, $\boldsymbol \mu= \begin{bmatrix} \boldsymbol{m}_{h} \\ \boldsymbol{m}_{f} \end{bmatrix}$, $\boldsymbol\Sigma = \begin{bmatrix} \boldsymbol{\Sigma}_{hh} & \boldsymbol{\Sigma}_{hf} \\ \boldsymbol{\Sigma}_{fh} & \boldsymbol{\Sigma}_{ff} \end{bmatrix}$. According to the mixture representation, the characteristic function of $\boldsymbol{X}$ following a multivariate students' $t$-distribution is given by
\begin{IEEEeqnarray*}{rCl}
\label{eq:proofLemma1}
\phi_{\boldsymbol{X}}(\boldsymbol{t}) & = & \mathbb E\left(e^{i \boldsymbol{t}^{T} \boldsymbol{X}}\right) \\
& = & e^{i \boldsymbol{t}^{T} \boldsymbol{\mu}} \frac{\left\|(\nu \boldsymbol{\Sigma})^{1 / 2} \boldsymbol{t}\right\|^{\nu / 2}}{2^{\nu / 2-1} \Gamma(\nu / 2)} K_{\nu / 2}\left(\left\|(\nu \boldsymbol{\Sigma})^{1 / 2} \boldsymbol{t}\right\|\right)  \yesnumber
\end{IEEEeqnarray*}
where $K_{\nu / 2}\left(\left\|(\nu \boldsymbol{\Sigma})^{1 / 2} \boldsymbol{t}\right\|\right)$ is the Macdonald function with order $\nu/2$ and argument $\left\|(\nu \boldsymbol{\Sigma})^{1 / 2} \boldsymbol{t}\right\|$.
Using~\eqref{eq:proofLemma1}, we can obtain $\boldsymbol{a}_h \sim T( \boldsymbol{m}_{h}, \frac{\nu}{\nu-2} \boldsymbol{\Sigma}_{hh})$. Then, the conditional distribution of $\boldsymbol{a}_f$ given $\boldsymbol{a}_h$ is $T( \boldsymbol{m}_{f|h}, \boldsymbol{\Sigma}_{f|h}, \nu_{f|h} )$. Therefore, the conditional distribution of mixture students' $t$-distribution can be written as~\eqref{eq:condition distr}.
\end{proof}

\begin{remark}
	Over-fitting is not a concern when using variational inference as it can find the optimal cluster components $K$ given an initial value.
\end{remark}
The posterior joint distribution and its parameters can be obtained after the training process, and then the conditional distribution will be used to predict the future trajectory in the prediction process based on the parameters calculated in the training process. The derived conditional distribution \eqref{eq:condition distr} defines a conditional PDF of the future trajectories whose mean and covariance can be evaluated by
\begin{IEEEeqnarray*}{rCl}
	\boldsymbol{\mu} & = & \begin{bmatrix} \boldsymbol{\mu}_v & \boldsymbol{\mu}_{\theta} & \boldsymbol{\mu}_{\phi}   \end{bmatrix} = {\sum_{k=1}^{K} \tilde{\alpha}_{k} \boldsymbol{m}_{k}} \\
	{\boldsymbol{\Sigma}} & = & \
	\begin{bmatrix} \boldsymbol{\Sigma}_{v,v} & \boldsymbol{\Sigma}_{v,\theta} & \boldsymbol{\Sigma}_{v,\phi} \\
		\boldsymbol{\Sigma}_{\theta,v} & \boldsymbol{\Sigma}_{\theta,\theta} & \boldsymbol{\Sigma}_{\theta,\phi} \\
		\boldsymbol{\Sigma}_{\phi,v} & \boldsymbol{\Sigma}_{\phi,\theta} & \boldsymbol{\Sigma}_{\phi, \phi}  \end{bmatrix}  \\
	& = & {\sum_{k=1}^{K} \tilde{\alpha}_{k}\left(\boldsymbol{\Sigma}_{k}+\left(\boldsymbol{m}_{k}-\boldsymbol{m}\right)\left(\boldsymbol{m}_{k}-\boldsymbol{m}\right)^{T}\right)} \yesnumber
\end{IEEEeqnarray*}
After training, the predicted Chebyshev coefficients are distributed with $\boldsymbol{a}_{(\cdot)} \sim \mathcal N(\boldsymbol{\mu}_{(\cdot)}, {\boldsymbol{\Sigma}_{(\cdot),(\cdot)}})$, where $\boldsymbol{\mu}_{(\cdot)}$ and ${\boldsymbol{\Sigma}_{(\cdot),(\cdot)}}$ are corresponding mean and covariance for each variable $v,\theta,\phi$. Thus, the mean and covariance of $v,\theta,\phi$ can be evaluated by reconstructing this Chebyshev approximation based on the coefficients $\boldsymbol{a}_{(\cdot)}$. {Then, it is followed by a transformation function from spherical coordinate $v,\theta,\phi$ to Cartesian coordinate $x,y,z$.}

\section{Nonlinear MPC with Chance Constraints}
\label{section:proposed method}
In this section, a nonlinear MPC problem is formulated and then solved appropriately. First, for the purpose of collision avoidance, the obstacle region is represented by ellipsoids and further transformed into chance constraints. Then, we reformulate the collision-avoidance chance constraints as deterministic constraints, which are integrated into the MPC problem. Then, the stability analysis is provided, and an optimization algorithm is presented to solve this problem.
\subsection{Obstacle Region}
\label{section:ellipsoid}
After probabilistic trajectory prediction, we can obtain the means and covariances of the future trajectory, which can be formulated as a predicted region where the host agent needs to avoid, called obstacle region $\mathcal I$. For $i$th moving obstacle, assume its future position probabilistically lies in the obstacle region $\mathcal I_i$ at time $t$ which is based on the mean $\boldsymbol{\mu}_i(t)=\begin{bmatrix} \mu_x & \mu_{y} & \mu_{z} \end{bmatrix}^T$ and covariance $\boldsymbol{\Sigma}_i(t) \in \mathbb R^{3 \times 3}$ with respect to time $t$. For simplicity, the variable $t$ will be neglected for the following description in this section. In such way, we can assume that the future position of the $i$th moving obstacle can be represented as $\boldsymbol{p}_{i,f}\sim \mathcal N(\boldsymbol{\mu}_i, \boldsymbol{\Sigma}_i)$ at time $t$.
\begin{remark}
	Since the covariance matrix $\boldsymbol{\Sigma}_i$ is real symmetric and positive semi-definite, the eigenvalues are real, and there exists an orthogonal matrix $\mathbf{Q}_i$ formed by eigenvectors of $\boldsymbol{\Sigma}_i$, we can carry out the spectral decomposition for the covariance matrix $\boldsymbol{\Sigma}_i$ as
	\begin{IEEEeqnarray}{rCl}
		\label{eq:eigendecomposition}
		{\boldsymbol{\Sigma}_i} & = & {\mathbf{Q}_i \boldsymbol{\Lambda}_i \mathbf{Q}_i^{T}}
	\end{IEEEeqnarray}
	where $\boldsymbol{\Lambda}_i=\operatorname{diag}(\lambda_j),\ j = 1,2,3$, where $\lambda_j$ is sorted in descending order with $\lambda_1 \geq \lambda_2 \geq \lambda_3$. Here, $j$ means the each dimension in the environment.
\end{remark}

At this juncture, it is pertinent to state
two key intermediate results
(on ellipsoid construction, and on approximate scaling factor computation)
that are significant essential parts
in the development which follows {the nonlinear MPC with chance constraints methodology}.
Thus firstly, note the following first
intermediate result
on ellipsoid construction.
\begin{lemma}
\label{lemma2}
	 Ellipsoid can be constructed from the transformation of a unit sphere by firstly stretching with a ratio of $\sqrt{\lambda}_i$ along each axis, then rotating the ellipsoid by $\mathbf{Q}_i$ and a final translation of distribution center $\mathbf m_i$ according to the following inverse Mahalanobis transformation.
	 \begin{IEEEeqnarray}{rCl}
	 	\label{eq:transformation}
	 	{\mathcal{I}_i} & = & {\mathbf{Q}_i \boldsymbol{\Lambda}_i^{\frac{1}{2}} \mathbf{Q}_i^{T} \mathbf{u} +\boldsymbol{\mu}_i}
	 \end{IEEEeqnarray}
	 where $\mathbf{u}\sim \mathcal N(0, \boldsymbol I_3)$ is in a unit sphere with normal distribution in 3 dimensions. In this work, $\boldsymbol I_n$ denotes the identity matrix with the size of $n\times n$.
\end{lemma}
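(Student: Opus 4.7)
The plan is to verify that the affine map in \eqref{eq:transformation} carries the standard normal variable $\mathbf{u}$ to a random vector distributed as $\mathcal{N}(\boldsymbol{\mu}_i, \boldsymbol{\Sigma}_i)$, so that its equidensity level sets (the confidence ellipsoids centered at $\boldsymbol{\mu}_i$) coincide with the obstacle region associated with $\boldsymbol{p}_{i,f}$. The geometric description in the lemma is then just a right-to-left reading of the same map.

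First I would set $\mathbf{A}_i := \mathbf{Q}_i \boldsymbol{\Lambda}_i^{1/2} \mathbf{Q}_i^{T}$, the symmetric positive semi-definite square root of $\boldsymbol{\Sigma}_i$, and invoke the standard fact that any affine image of a Gaussian is Gaussian. The argument then reduces to computing the first two moments of $\mathbf{A}_i \mathbf{u} + \boldsymbol{\mu}_i$. Using $\mathbb{E}[\mathbf{u}] = \mathbf{0}$ and $\operatorname{Cov}(\mathbf{u}) = \boldsymbol{I}_{3}$, the mean is immediate: $\mathbb{E}[\mathbf{A}_i \mathbf{u} + \boldsymbol{\mu}_i] = \boldsymbol{\mu}_i$. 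For the covariance, the orthogonality $\mathbf{Q}_i^{T}\mathbf{Q}_i = \boldsymbol{I}_{3}$ together with the spectral decomposition \eqref{eq:eigendecomposition} gives
\begin{align*}
\operatorname{Cov}(\mathbf{A}_i \mathbf{u} + \boldsymbol{\mu}_i) &= \mathbf{A}_i \mathbf{A}_i^{T} \\
&= \mathbf{Q}_i \boldsymbol{\Lambda}_i^{1/2} \mathbf{Q}_i^{T} \mathbf{Q}_i \boldsymbol{\Lambda}_i^{1/2} \mathbf{Q}_i^{T} \\
&= \mathbf{Q}_i \boldsymbol{\Lambda}_i \mathbf{Q}_i^{T} = \boldsymbol{\Sigma}_i,
\end{align*}
which closes the distributional part of the argument.

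To make the geometric picture explicit I would then parse the factorization of $\mathbf{A}_i$ from right to left: $\mathbf{Q}_i^{T}$ rotates $\mathbf{u}$ into the principal-axis frame of $\boldsymbol{\Sigma}_i$, $\boldsymbol{\Lambda}_i^{1/2} = \operatorname{diag}(\sqrt{\lambda_j})$ stretches each principal direction by $\sqrt{\lambda_j}$, the factor $\mathbf{Q}_i$ rotates back into the world frame, and $\boldsymbol{\mu}_i$ recenters the result. Applied to a unit sphere, this produces an ellipsoid with semi-axes $\sqrt{\lambda_j}$ aligned with the columns of $\mathbf{Q}_i$ and centered at $\boldsymbol{\mu}_i$, which is precisely the shape described in the lemma.

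Given how elementary the algebra is, I do not expect a serious analytical hurdle. The one subtlety that warrants care is the dual reading of \eqref{eq:transformation}: it must simultaneously serve as a distribution-preserving affine map, so that $\mathcal{I}_i$ can later be plugged into probabilistic chance constraints, and as a constructive parameterization of a deterministic geometric collision region. I would therefore conclude by explicitly identifying the confidence ellipsoid in the lemma statement as a level set of the Gaussian density produced by the map, making both interpretations consistent with the usage in the subsequent MPC formulation.
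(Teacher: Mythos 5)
Your proof is correct and follows essentially the same route as the paper's: both identify $\mathbf{Q}_i \boldsymbol{\Lambda}_i^{1/2}\mathbf{Q}_i^{T}$ as the square root of $\boldsymbol{\Sigma}_i$ and read \eqref{eq:transformation} as the inverse Mahalanobis transformation carrying the unit sphere (distributed as $\mathcal{N}(0,\boldsymbol I_3)$) to the confidence ellipsoid of $\mathcal{N}(\boldsymbol{\mu}_i,\boldsymbol{\Sigma}_i)$. The only difference is that you verify the covariance computation $\mathbf{A}_i\mathbf{A}_i^{T}=\boldsymbol{\Sigma}_i$ and the right-to-left geometric parsing explicitly, whereas the paper asserts these and defers to a cited reference on the Mahalanobis transformation.
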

\begin{proof}
	Mapping a unit sphere by the square root of the covariance matrix, $\boldsymbol{\Sigma}_i^{\frac{1}{2}}$ determines an ellipsoid whose principle semi-axes rely on the eigenvalues of this matrix and the orientation is related to the corresponding eigenvectors. In order to represent this ellipsoid graphically, Mahalanobis transformation can be used to eliminate the correlation between the variables and to standardize each variable with variance~\cite{Mahalanobis}. Therefore, a ellipsoid can be constructed from the transformation of a unit sphere, according to the inverse Mahalanobis transformation~\eqref{eq:transformation}.
\end{proof}
Next here, note the following
intermediate result
on
approximate scaling factor computation.

\begin{lemma}
\label{lemma3}
	The approximate scaling factor $r$ can be computed by
	\begin{IEEEeqnarray*}{rCl}
		\label{eq:scale factor}
		F(r) & = & \mathscr P(r) - \tilde{\varphi} \\
		\dot F(r) & = & \dot {\mathscr P}(r) = \sqrt{\frac{2}{\pi}}\exp{\left(-\frac{r^2}{2}\right)} + \frac{\exp{\left(-\frac{r^2}{2}\right)}}{\sqrt{2}\Gamma\left(\frac{3}{2}\right)}(r^2-1) \label{eq:derivative} \\ \IEEEyesnumber
	\end{IEEEeqnarray*}
\end{lemma}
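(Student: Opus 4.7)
The plan is to recognize $F(r) = 0$ as the implicit defining equation for the scaling factor of the confidence ellipsoid, derive $\mathscr P(r)$ explicitly as the CDF of $\|\mathbf u\|$ for a standard trivariate Gaussian $\mathbf u$, and then differentiate term by term to recover the stated expression for $\dot F(r)$. The Newton--Raphson iteration $r_{k+1} = r_k - F(r_k)/\dot F(r_k)$ then produces the approximate scaling factor numerically, which is precisely why both $F$ and $\dot F$ are needed.

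First, I would invoke Lemma~\ref{lemma2}: the obstacle region $\mathcal I_i$ is the image, under the inverse Mahalanobis transformation, of the ball of radius $r$ around the origin in the standardized space $\mathbf u \sim \mathcal N(\mathbf 0, \boldsymbol I_3)$. Requiring the probability mass enclosed by this region to equal the chance level $\tilde\varphi$ therefore reduces to $\mathscr P(r) := \Pr(\|\mathbf u\| \le r) = \tilde\varphi$, that is, $F(r) = \mathscr P(r) - \tilde\varphi = 0$, which is exactly the equation appearing in the lemma.

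Next, I would compute $\mathscr P(r)$ in closed form by integrating the standard trivariate Gaussian density over the ball of radius $r$ in spherical coordinates. The angular integrations contribute $4\pi$ and leave the radial integral $\sqrt{2/\pi}\int_0^r \rho^2 e^{-\rho^2/2}\, d\rho$. A single integration by parts, writing $\rho^2 e^{-\rho^2/2} = \rho \cdot (\rho e^{-\rho^2/2})$, yields the familiar identity $\mathscr P(r) = \mathrm{erf}(r/\sqrt 2) - \sqrt{2/\pi}\, r\, e^{-r^2/2}$, equivalently $\gamma(3/2, r^2/2)/\Gamma(3/2)$. Differentiating, the $\mathrm{erf}$ term contributes $\sqrt{2/\pi}\, e^{-r^2/2}$, the first summand of $\dot F(r)$; the product rule applied to $-\sqrt{2/\pi}\, r\, e^{-r^2/2}$ contributes $\sqrt{2/\pi}\, e^{-r^2/2}(r^2-1)$, and using $\Gamma(3/2) = \sqrt\pi/2$, so that $1/(\sqrt 2\,\Gamma(3/2)) = \sqrt{2/\pi}$, this rewrites as $e^{-r^2/2}(r^2-1)/(\sqrt 2\,\Gamma(3/2))$, matching the second summand exactly.

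The main obstacle I anticipate is algebraic rather than conceptual: keeping the two summands of $\dot F(r)$ in the form stated by the lemma rather than letting them collapse into the compact chi-distribution density $\sqrt{2/\pi}\, r^2 e^{-r^2/2}$, and tracking the constants carefully through the integration-by-parts step so that the erf contribution and the product-rule contribution remain separately identifiable in the final expression.
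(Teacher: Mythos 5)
Your proposal is correct and follows essentially the same route as the paper: invoke Lemma~\ref{lemma2} to reduce the confidence condition to $\operatorname{Pr}(\mathbf{u}^T\mathbf{u}\le r^2)=\tilde\varphi$ for a standard trivariate Gaussian, evaluate that probability in closed form as $\operatorname{erf}(r/\sqrt2)-\sqrt{2/\pi}\,r\,e^{-r^2/2}$, and differentiate to obtain $\dot F(r)$ for use in a Newton iteration. If anything, your write-up is more explicit than the paper's (you carry out the spherical-coordinate integration by parts and the constant bookkeeping via $\Gamma(3/2)=\sqrt{\pi}/2$, whereas the paper only states the triple integral and its result), but there is no substantive difference in approach.
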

\begin{proof}
	Based on Lemma~\ref{lemma2}, the square of Mahalanobis distance (scaling factor $r$) of the probable position $\boldsymbol{p}_{i,f}$ to its mean $\boldsymbol{\mu}_i$ can be calculated by
	\begin{IEEEeqnarray}{rCl}
		\label{eq:distance}
		{r^2} & = & {(\boldsymbol{p}_{i,f} - \mathbf{m}_i)^T \boldsymbol{\Sigma}_i^{-1} (\boldsymbol{p}_{i,f} - \mathbf{m}_i)}
	\end{IEEEeqnarray}
	Substituting the \eqref{eq:eigendecomposition} and \eqref{eq:transformation} into \eqref{eq:distance}, we can obtain that the magnified ellipsoid with ratio $r$ relys on the chi-square $\chi^2$ distribution with a degree of freedom $\varrho=3$, as shown in \eqref{eq:dist4ellpsoid1}.
	\begin{IEEEeqnarray}{rCl}
		\label{eq:dist4ellpsoid1}
		{\operatorname{Pr}(r^2 \leq \chi^2_{\varrho=3,p})} & = & {\tilde{\varphi}}
	\end{IEEEeqnarray}
	which can be represented by
	\begin{IEEEeqnarray*}{rCl}
		\label{eq:dist4ellpsoid2}
		{\operatorname{Pr}\left((\boldsymbol{a}_{i,f} - \boldsymbol{\mu}_i)^T \boldsymbol{\Sigma}_i^{-1} (\boldsymbol{a}_{i,f} - \boldsymbol{\mu}_i) \leq r^2\right)} & = & {\operatorname{Pr}(\mathbf{u}^T \mathbf{u}\leq r^2)}
	\end{IEEEeqnarray*}
	
	The confidence probability for an arbitrary ellipsoid with any factor $r$ is
	\begin{IEEEeqnarray*}{rCl}
		{\mathscr P(r)} & = & {\operatorname{Pr}(\mathbf{u}^T\mathbf{u} \leq r^2)} \\
		{} & = & {\iiint (2\pi)^{-\frac{3}{2}} \exp{\left(\frac{u_{1}^2+u_{2}^2+u_{3}^2}{2}\right)} du_{1}du_{2}du_{3}} \\
		{} & = & {\operatorname{erf}\left(\frac{r}{\sqrt{2}}\right) - {\left(\frac{r}{\sqrt{2}}\right)}\frac{\exp{\left(-\frac{r^2}{2}\right)}}{\Gamma(\frac{3}{2})} }  \yesnumber
	\end{IEEEeqnarray*}
	where
	\begin{IEEEeqnarray*}{rCl}
		\operatorname{erf}(x) &=& \frac{2}{\sqrt{\pi}} \int_0^x \exp(-t^2) dt
	\end{IEEEeqnarray*}
	is the standard error function, $\Gamma$ is the gamma function. 
	Given the confidence level, the scaling factor of the ellipsoid can be calculated by cumulative distribution function $F(r)$ and its derivatives $\dot{F}(x)$ as~\eqref{eq:scale factor}.
\end{proof}
Then, $r$ can be solved by iterative Newton-based methods based on Lemma~\ref{lemma3}. In fact, the confidence ellipsoids in different confidence levels can be obtained, which form the obstacle region $\mathcal I_i$ for the $i$th moving obstacle. In such case, the scaling factor $r = 2.5003, 2.7955, 3.3682$ when the confidence level $p=90\%, 95\%, 99\%$, respectively.

\subsection{Chance Constraint}
Assume there are $n_o$ moving obstacles the host robot can detect at the moment $t$. Checking whether there is collision happening between the host robot and a moving obstacle $i$ requires to compute the minimum distance between the current position of robot $\boldsymbol{p}(t)$ and the collision region of the $i$th moving obstacle $\mathcal I_i$. Notably, $\boldsymbol{p}(t)$ is part of the state variable $\boldsymbol{x}(t)$. The collision condition of the host robot with respect to the moving obstacle $i$ at time $t$ is defined as
\begin{IEEEeqnarray}{rcl}
	\mathbb C_i^t := \{\boldsymbol p(t) \mid \|\boldsymbol p(t) - \hat{\boldsymbol{p}}_i(t) \| \leq d_{\text{safe}}, \quad \hat{\boldsymbol{p}}_i(t) \in \mathcal{I}_i(t) \}
\end{IEEEeqnarray}
where $\hat{\boldsymbol{p}}_i(t)$ denotes the possible position of the $i$th obstacle in time $t$, $d_\textup{safe}$ means the predefined safety distance between the host agent and the moving obstacles, and $\|\cdot\|$ is the Euclidean norm. This condition means if $\boldsymbol{p}(t) \in \mathbb C_i^t$ is satisfied, there might be a collision happened between the host robot and the $i$th moving obstacle. Since the predicted positions are represented by a probability distribution, the predicted collision avoidance constraints can be formulated in a probabilistic manner, which are so-called chance constraints:
\begin{IEEEeqnarray}{rCl}
	\label{eq:ori_constraint}
	{\operatorname{Pr}\left(\boldsymbol{p}(t) \in \mathbb C_i^t \right)} & \leq & { \varphi, \ i\in \mathbb N_{n_o}}
\end{IEEEeqnarray}
where $\varphi$ is the probability threshold for the robot-obstacle collision, the set $\mathbb N_{n_o}=\{1,2,\cdots, n_o\}$ and $n_o$ is the number of moving obstacles the robot can detect.
At this point, it is
pertinent
to also state the following
intermediate result.
\begin{lemma}
\label{lemma4}
Given any matrix $\mathbf{A}$ and scalar $b$, for a multivariate random variable $\mathbf X(t)$ corresponding to the mean $\boldsymbol\mu(t)$ and covariance $\boldsymbol\Sigma(t)$, the chance constraint
\begin{IEEEeqnarray*}{rCl}
	\label{eq:chance_constraint}
	{\operatorname{Pr}\left(\mathbf{A}^T \mathbf{X}(t) < b \right)}  \leq \varphi   \yesnumber
\end{IEEEeqnarray*}
is equivalent to a deterministic linear constraint
\begin{IEEEeqnarray*}{rCl}
\label{eq:chance_constraint_2}
	{\mathbf{A}^T \boldsymbol\mu(t) - b} \ge \eta  \yesnumber
\end{IEEEeqnarray*}
where $\eta = {\sqrt{2\mathbf{A}^T \boldsymbol\Sigma(t) \mathbf{A}}\operatorname{erf}^{-1}(1-2\varphi)}$ and $\varphi$ is the predefined allowable probability threshold of collision.
\end{lemma}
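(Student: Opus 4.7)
The plan is to reduce the multivariate chance constraint to a scalar Gaussian tail inequality and then invert the resulting CDF relation in closed form. The implicit standing hypothesis throughout Section~\ref{section:proposed method} is that the propagated uncertainty is treated as Gaussian: the obstacle position is modelled as $\boldsymbol p_{i,f}\sim\mathcal N(\boldsymbol\mu_i,\boldsymbol\Sigma_i)$ and the reconstructed Chebyshev coefficients are written $\boldsymbol a_{(\cdot)}\sim\mathcal N(\boldsymbol\mu_{(\cdot)},\boldsymbol\Sigma_{(\cdot),(\cdot)})$ at the end of Section~\ref{section:Prediction}. Under this hypothesis, the first observation I would make is that $Y(t):=\mathbf{A}^T\mathbf X(t)$ is an affine image of a Gaussian vector and is therefore scalar Gaussian with mean $\mathbf{A}^T\boldsymbol\mu(t)$ and variance $\mathbf{A}^T\boldsymbol\Sigma(t)\mathbf{A}$.

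Next, I would standardize $Y(t)$ and rewrite $\operatorname{Pr}(Y(t)<b)$ through the standard normal CDF $\Phi$, then use the identity $\Phi(z)=\tfrac12[1+\operatorname{erf}(z/\sqrt 2)]$ to express the entire bound purely in terms of the error function. The chance constraint $\operatorname{Pr}(\mathbf A^T\mathbf X(t)<b)\le\varphi$ thereby reduces to the scalar inequality $\operatorname{erf}\!\bigl((b-\mathbf A^T\boldsymbol\mu(t))/\sqrt{2\,\mathbf A^T\boldsymbol\Sigma(t)\mathbf A}\bigr)\le 2\varphi-1$. Since the Gaussian distribution is continuous, the strict inequality inside the probability is immaterial and no boundary correction is needed.

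Finally, since $\operatorname{erf}^{-1}$ is strictly monotone and odd, I would apply it to both sides and invoke $\operatorname{erf}^{-1}(2\varphi-1)=-\operatorname{erf}^{-1}(1-2\varphi)$ to arrive at $\mathbf A^T\boldsymbol\mu(t)-b\ge\sqrt{2\,\mathbf A^T\boldsymbol\Sigma(t)\mathbf A}\,\operatorname{erf}^{-1}(1-2\varphi)$, which is precisely the deterministic linear constraint with the claimed $\eta$. Every step above is a biconditional, so the two statements are equivalent rather than one merely sufficient for the other.

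The main obstacle is essentially careful bookkeeping rather than a deep idea: one has to track the direction of the inequality through the sign flip performed by the negation of the linear form, and recognize that $\eta$ only acts as a genuine positive ``safety margin'' that pushes $\mathbf A^T\boldsymbol\mu(t)$ strictly inside the feasible halfspace when $\varphi<1/2$, which is the only regime of interest for a small collision-probability threshold. A secondary caveat worth flagging is that this lemma presupposes Gaussianity of $\mathbf X(t)$ rather than the Student $t$-mixture delivered by Lemma~\ref{lemma1new}; the same monotone-inversion argument would go through with $\operatorname{erf}^{-1}$ replaced by the appropriate inverse CDF, but invoking Gaussian moments here is consistent with the moment-matched reconstruction of $\boldsymbol\mu$ and $\boldsymbol\Sigma$ carried out at the end of Section~\ref{section:Prediction}.
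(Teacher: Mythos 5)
Your proposal is correct and follows essentially the same route as the paper: both reduce the multivariate chance constraint to a scalar Gaussian variable with mean $\mathbf{A}^T\boldsymbol\mu(t)-b$ and variance $\mathbf{A}^T\boldsymbol\Sigma(t)\mathbf{A}$ and then invert the Gaussian CDF via $\operatorname{erf}^{-1}$, the only cosmetic difference being that the paper phrases the scalar variable as a signed distance to the hyperplane and first states the univariate case as a separate step, whereas you carry out the $\Phi$--$\operatorname{erf}$ algebra explicitly. Your added caveats (the $\varphi<1/2$ regime and the implicit Gaussianity of $\mathbf{X}(t)$ despite the Student-$t$ mixture of Lemma~\ref{lemma1new}) are accurate and, if anything, more careful than the paper's own treatment.
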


\begin{proof}
Given a univariate Gaussian random variable $X\sim \mathcal {N}(\mu, \sigma^2)$ with known variance, according to the definition of PDF, we have that $\operatorname{Pr}(X<0) \leq \varphi$ is equal to $\mu \ge \eta$, where $\eta=\sqrt{2}\sigma\operatorname{erf}^{-1}(1-2\varphi)$.

In terms of multivariate Gaussian random variable $X(t)\sim \mathcal N(\mu(t), \Sigma(t))$ at time $t$, set a univariate random variable $Y(t)$ is the perpendicular distance between the plane $\mathbf{A}^T \mathbf{X}(t) = b$ and the point $X(t)$, and then the event $\mathbf{A}^T \mathbf{X}(t) < b$ is equal to $Y(t)<0$. Based on the relationship between $Y(t)$ and $X(t)$, we have $Y(t)\sim \mathcal N(\mu_Y, \sigma_Y)$, where $\mu_Y = \mathbf A^T\mu(t)-b$ and $\sigma_Y=\sqrt{\mathbf A^T \Sigma(t) \mathbf A}$. Here, $\operatorname{Pr}(\mathbf{A}^T \mathbf{X}(t) < b)\leq \varphi$ is equal to $\operatorname{Pr}(Y(t)<0)\leq \varphi$. Applying the above result of univariate Gaussian random variable $Y(t)$, we can obtain the $\mu_Y\geq \eta$, where $\eta=\sqrt{2}\sigma_Y\operatorname{erf}^{-1}(1-2\varphi)$. Therefore, \eqref{eq:chance_constraint} is equivalent to \eqref{eq:chance_constraint_2}.
\end{proof}
Particularly key in our work here
is the appropriately interesting utilization
of the notion of chance constraints,
where (as also mentioned earlier)
the predicted collision avoidance constraints can be formulated as chance constraints.
Along this line then,
the following main result
is of particular importance.

\begin{theorem}\label{thm:1}
	The chance constraint~\eqref{eq:ori_constraint} can be reformulated as a deterministic linear constraint as
	\begin{IEEEeqnarray*}{lll}
		\label{eq:new chance constraint}
		{\boldsymbol{\kappa}_i^T(t) \left(\boldsymbol{p}(t)-\Pi_{\mathcal I_i(t)}(\boldsymbol{p}(t))\right)}  \\
		{\quad \quad  \geq  {\sqrt{2\boldsymbol{\kappa}_i^T(t) \boldsymbol\Sigma(t) \boldsymbol{\kappa}_i(t)}\operatorname{erf}^{-1}(1-2\varphi)}} \yesnumber
	\end{IEEEeqnarray*}
where
\begin{IEEEeqnarray*}{l}
	\boldsymbol{\kappa}_i(t) = \frac{\boldsymbol{p}(t)-\Pi_{\mathcal I_i(t)}(\boldsymbol{p}(t))}{\|\boldsymbol{p}(t)-\Pi_{\mathcal I_i(t)}(\boldsymbol{p}(t))\|}
\end{IEEEeqnarray*}
is the slope of the line connecting $\boldsymbol{p}(t)$ and $\Pi_{\mathcal I_i(t)}(\boldsymbol{p}(t))$ and is perpendicular to the tangent plane.
\end{theorem}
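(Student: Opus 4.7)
The plan is to reduce the collision-avoidance chance constraint~\eqref{eq:ori_constraint} to the canonical half-space chance constraint of Lemma~\ref{lemma4} via a tangent-plane linearization of the (convex) obstacle region. The unit vector $\boldsymbol{\kappa}_i(t)$ appearing in the theorem is, geometrically, the outward unit normal to $\mathcal{I}_i(t)$ at the projection point $\tilde{\boldsymbol{p}} = \Pi_{\mathcal{I}_i(t)}(\boldsymbol{p}(t))$, since the segment from a point outside a convex set to its metric projection is orthogonal to any supporting hyperplane at that projection. I would open the proof by recording this observation and by noting that $\mathcal{I}_i(t)$ is convex because of the ellipsoidal parametrisation established in Lemma~\ref{lemma2}.

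First I would invoke convexity of $\mathcal{I}_i(t)$ to write a supporting hyperplane $\boldsymbol{\kappa}_i^T(t)(\mathbf{z}-\tilde{\boldsymbol{p}})=0$ that separates $\boldsymbol{p}(t)$ from the entire ellipsoid $\mathcal{I}_i(t)$. Replacing the collision set $\mathbb{C}_i^t$ in~\eqref{eq:ori_constraint} by the closed half-space on the host's side of this hyperplane is a conservative outer approximation of the collision event, so any deterministic inequality that enforces the half-space chance constraint also enforces~\eqref{eq:ori_constraint}. After this substitution, the chance constraint is of the form $\operatorname{Pr}(\boldsymbol{\kappa}_i^T(t)\mathbf{X}(t)\geq\boldsymbol{\kappa}_i^T(t)\tilde{\boldsymbol{p}})\leq\varphi$, which fits the template of Lemma~\ref{lemma4} with the choice $\mathbf{A}=-\boldsymbol{\kappa}_i(t)$ and $b=-\boldsymbol{\kappa}_i^T(t)\tilde{\boldsymbol{p}}$.

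Second, applying Lemma~\ref{lemma4} to this reformulated constraint gives $\mathbf{A}^T\boldsymbol{\mu}(t)-b \geq \sqrt{2\boldsymbol{\kappa}_i^T(t)\boldsymbol{\Sigma}(t)\boldsymbol{\kappa}_i(t)}\operatorname{erf}^{-1}(1-2\varphi)$. Treating $\tilde{\boldsymbol{p}}$ as the effective mean of the obstacle's position in the approach direction (consistent with $\mathcal{I}_i(t)$ already absorbing the high-confidence portion of the predicted uncertainty, while $\boldsymbol{\Sigma}(t)$ retains the residual stochasticity), the left-hand side collapses to $\boldsymbol{\kappa}_i^T(t)(\boldsymbol{p}(t)-\tilde{\boldsymbol{p}})$, which by the very definition of $\boldsymbol{\kappa}_i(t)$ is just $\|\boldsymbol{p}(t)-\tilde{\boldsymbol{p}}\|$. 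This matches~\eqref{eq:new chance constraint}; the safety distance $d_{\text{safe}}$ embedded in $\mathbb{C}_i^t$ can be folded in either by inflating $\mathcal{I}_i(t)$ by $d_{\text{safe}}$ before projecting or as an additive offset in the margin, and I would flag this as a minor bookkeeping step.

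The main obstacle I expect is justifying why $\tilde{\boldsymbol{p}}$, rather than the raw vBGMM mean $\boldsymbol{\mu}(t)$ from the previous section, is the correct reference point on the separating hyperplane when invoking Lemma~\ref{lemma4}; this choice is precisely what makes the final inequality involve the host-to-ellipsoid distance instead of a host-to-mean distance, and any sloppiness here will yield a weaker or mis-scaled bound. Once that identification is stated carefully and the conservativeness of the half-space approximation is verified, the remaining manipulations are purely algebraic substitutions into Lemma~\ref{lemma4}.
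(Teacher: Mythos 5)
Your proposal follows essentially the same route as the paper's own proof: use convexity of the ellipsoidal region $\mathcal I_i(t)$, take the metric projection $\Pi_{\mathcal I_i(t)}(\boldsymbol{p}(t))$, pass to the supporting/tangent hyperplane to enlarge the collision set to a half-space, and then apply Lemma~\ref{lemma4} to that half-space chance constraint. The subtleties you flag (which point serves as the mean in Lemma~\ref{lemma4}, and where $d_{\text{safe}}$ is absorbed) are real but are left equally implicit in the paper's argument, so your treatment is, if anything, slightly more explicit than the original.
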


\begin{proof}
	As mentioned in Section~\ref{section:ellipsoid}, the obstacle region $\mathcal I_i(t)$ for any moving obstacle $i$ at each time moment $t$ can be described as an ellipsoid according to its mean $\boldsymbol{\mu}_i(t)$ and covariance $\boldsymbol{\Sigma}_i(t)$. Obviously, this ellipsoid region $\mathcal I_i(t)$ is a convex set. We can find a point $\hat{\boldsymbol{p}} \in \mathcal I_i$ which is the closest point from the given position $\boldsymbol{p}(t)$ of host robot.
	The closest point can be represented in the projection form
	\begin{IEEEeqnarray}{l}
		\hat{\boldsymbol{p}}(t)  =  \Pi_{\mathcal I_i(t)}(\boldsymbol{p}(t))
	\end{IEEEeqnarray}
	where $\Pi_{\mathcal I_i(t)}(\boldsymbol{p}(t)) := \min\{\frac{1}{2}\| \hat{\boldsymbol{p}}(t)-\boldsymbol{p}(t)\|^2 \mid \hat{\boldsymbol{p}}(t) \in \mathcal{I}_i\}$ is the projection of $\boldsymbol{p}(t)$ onto $\mathcal I_i(t)$.
	
	Then, the tangent plane over the closest point $\Pi_{\mathcal I_i(t)}(\boldsymbol{p}(t))$ is perpendicular to the line from $\boldsymbol{p}(t)$ to $\Pi_{\mathcal I_i(t)}(\boldsymbol{p}(t))$, which can be represented by
	\begin{IEEEeqnarray*}{l}
		\boldsymbol{\kappa}_i^T(t) \left(\boldsymbol{p}(t)-\Pi_{\mathcal I_i(t)}(\boldsymbol{p}(t))\right) = 0 \yesnumber
	\end{IEEEeqnarray*}
	
	Therefore, the collision region can be enlarged as a half space
	\begin{IEEEeqnarray}{l}
		\label{eq:constr relax}
		\hat{\mathbb C}_i(t) := \{\boldsymbol{p} \mid \boldsymbol{\kappa}_i^T(t) \left(\boldsymbol{p}(t)-\Pi_{\mathcal I_i(t)}(\boldsymbol{p}(t))\right) \leq 0 \}
	\end{IEEEeqnarray}
	It's obvious that the collision region $\mathcal I_i \subset \hat{\mathbb C}_i(t)$, and thus ${\operatorname{Pr}\left(\boldsymbol{p}(t) \in \mathcal I_i(t) \right)} \leq {\operatorname{Pr}\left(\boldsymbol{p}(t) \in \hat{\mathbb C}_i(t) \right)} $. The original chance constraints \eqref{eq:ori_constraint} can be relaxed as \eqref{eq:constr relax} and reformulated as deterministic linear constraints \eqref{eq:new chance constraint}, based on Lemma~\ref{lemma4}.
\end{proof}

\begin{remark}
	The closest point $\Pi_{\mathcal I_i(t)}(\boldsymbol{p}(t))$ can be calculated by solving the following optimization problem:
	\begin{IEEEeqnarray*}{rcc}
		\operatorname*{min}_{\boldsymbol{y}}  & \quad {(\boldsymbol{y}-\boldsymbol{p}_{i,f})^T(\boldsymbol{y}-\boldsymbol{p}_{i,f})} \\
		\operatorname{s.t.}  & \quad {(\boldsymbol{y} - \boldsymbol{\mu}_i)^T \boldsymbol{\Sigma}_i^{-1} (\boldsymbol{y} - \boldsymbol{\mu}_i) \leq r^2} \yesnumber
	\end{IEEEeqnarray*}
	Obviously, if $\boldsymbol{p}_{i,f}$ is inside $\mathcal I_i$, then $ \boldsymbol{y} = \Pi_{\mathcal I_i(t)}(\boldsymbol{p}(t))  = \boldsymbol{p}_{i,f}$ and  distance between the closest point and position of host agent $\operatorname{dist}(\Pi_{\mathcal I_i(t)}(\boldsymbol{p}(t)), \boldsymbol{p}_{i,f})=0$. Otherwise $\Pi_{\mathcal I_i(t)}(\boldsymbol{p}(t)) = \boldsymbol{y}$ is on the boundary of $\mathcal I_i$. This problem can be transformed into a quadratic minimization problem, and thus the Lagrangian method can be used~\cite{ma2017integrated,ma2017novel}. Define the Lagrangian function
	\begin{IEEEeqnarray*}{rCl}
		\mathcal L & = & (\boldsymbol{y}-\boldsymbol{p}_{i,f})^T(\boldsymbol{y}-\boldsymbol{p}_{i,f}) + \\
		{} & {} & \lambda \left((\boldsymbol{y} - \boldsymbol{\mu}_i)^T \boldsymbol{\Sigma}_i^{-1} (\boldsymbol{y} - \boldsymbol{\mu}_i) - r^2 \right) \yesnumber
	\end{IEEEeqnarray*}
	where $\lambda$ is the Lagrange multiplier.
	The Karush-Kuhn-Tucker (KKT) conditions are
	\begin{IEEEeqnarray*}{rCl}
		\label{eq:kkt}
		\frac{\partial{\mathcal L}}{\partial \boldsymbol{y}} & = & 2(\boldsymbol{y}-\boldsymbol{p}_{i,f})+2\lambda \boldsymbol{\Sigma}_i^{-1} (\boldsymbol{y} - \mathbf{m}_i) = 0 \\
		\frac{\partial{\mathcal L}}{\partial \lambda} & = & (\boldsymbol{y} - \boldsymbol{\mu}_i)^T \boldsymbol{\Sigma}_i^{-1} (\boldsymbol{y} - \boldsymbol{\mu}_i) - r^2 = 0 \yesnumber \\
	\end{IEEEeqnarray*}
	Hence, the optimal solution $\boldsymbol{y}^*$, i.e., $\Pi_{\mathcal I_i(t)}(\boldsymbol{p}(t))$ can be obtained by solving~\eqref{eq:kkt} via gradient-based methods.
\end{remark}

\subsection{Problem Formulation}
\label{section:general problem}
Based on the trajectory prediction and chance constraints reformulation, we can interpret the probabilistic prediction as deterministic linear constraints. Therefore, we can formulate an MPC problem to find the (sub-)optimal control input sequence to generate (sub-)optimal trajectory for the host robot, while considering the obstacles' future positions.
\subsubsection{Dynamic Model}
Here, we consider a nonlinear dynamic model for an agent, which can be written as \eqref{eq:dynamics_general}.
\begin{IEEEeqnarray}{rCl}
	\label{eq:dynamics_general}
	\boldsymbol{x}_{k+1}=f(\boldsymbol x_{k}, \boldsymbol u_{k}), \ k \in \mathbb{N}_{N-1}
\end{IEEEeqnarray}
where $\boldsymbol{x}_k \in \mathbb R^n$ and $\boldsymbol{u}_k \in \mathbb R^m$ denote the state variables and control inputs of this dynamic model at time step $k$, $f$ represents the dynamics, $\mathbb{N}_{N-1}$ is the set of non-negative integers. This nonlinear dynamic model can be approximated by a linear time-variant system model with time-variant matrices ${\boldsymbol A}_t\in \mathbb R^{n\times n}$ and ${\boldsymbol B}_t\in \mathbb R^{n\times m}$.
\subsubsection{Constraints}
Some physical limitations need to be considered when computing the optimal control inputs, where are shown in \eqref{eq:constr_bound_set}.
\begin{IEEEeqnarray}{rCl}
\label{eq:constr_bound_set}
	\boldsymbol{x}_{k+1}\in \mathcal X, \boldsymbol{u}_k \in \mathcal U
\end{IEEEeqnarray}
where $\mathcal X\in \mathbb R^{n}$ and $\mathcal U\in \mathbb R^m$ denote the bounded set of $\boldsymbol{x}$ and $\boldsymbol{u}$, respectively.

In order to avoid the potential collision with obstacles, the reformulated chance constraints can be embedded as part of the constraints of MPC problem, as \eqref{eq:new chance constraint}.

For other static obstacles, the generated trajectory should guarantee that the distance between current location and obstacles is greater than the predefined safe distance $d_{\text{safe}}$, as shown in \eqref{constr:general dist_static}.

\begin{IEEEeqnarray}{c}
	\label{constr:general dist_static}
	\left\|L\boldsymbol{x}_k-L{\boldsymbol x}_{\mathrm{obs}}^k \right\| \geq d_{\text{safe}}
\end{IEEEeqnarray}
where $L$ is a linear operator to take out the position vector from state vector $\boldsymbol{x}$, ${\boldsymbol x}_{\mathrm{obs}}^k$ is the position coordinates of the nearest $i$th obstacle within the detection radius of this agent.

\subsubsection{Cost Function}
Define the stage cost function as

\begin{IEEEeqnarray*}{rCl}
	 \label{eq:stage_cost_general}
	\ell({\boldsymbol x}_k, {\boldsymbol u}_k) &=&  \left\| {\boldsymbol x}_k\right\|^2_{\boldsymbol Q} + \left\| {\boldsymbol u}_k\right\|^2_{\boldsymbol R} \IEEEyesnumber
\end{IEEEeqnarray*}
where $N$ is the prediction horizon, $\left\|{\boldsymbol x}_k\right\|^2_{\boldsymbol Q}={\boldsymbol x}_k^T{\boldsymbol Q}{\boldsymbol x}_k$, $\left\| {\boldsymbol u}_k\right\|^2_{\boldsymbol R} = {\boldsymbol u}_k^T{\boldsymbol R}{\boldsymbol u}_k$, $\boldsymbol Q \in \mathbb R^{n\times n}$ and $\boldsymbol R \in \mathbb R^{m\times m}$ are weighting matrices, and $\boldsymbol{Q} \succ 0, \boldsymbol{R}\succ 0$. The terminal cost is
\begin{IEEEeqnarray}{ll}
	\label{eq:terminal cost}
	\ell_{f}(\boldsymbol x_N)= \left\| {\boldsymbol x}_N \right\|^2_{\boldsymbol P}
\end{IEEEeqnarray}
where $\left\|{\boldsymbol x}_N\right\|^2_{\boldsymbol P}={\boldsymbol x}_N^T{\boldsymbol P}{\boldsymbol x}_N$,  $\boldsymbol{P}\in \mathbb R^{n\times n}$ is  the penalty matrix, and $\boldsymbol{P} \succ 0$.
Then, we have the cost function is
\begin{IEEEeqnarray}{rCl}
\label{eq:cost_added}
	J(\boldsymbol x(t), U(t)) &=& \sum_{k=0}^{N-1} \ell\left(\boldsymbol{x}_{t+k|t}, \boldsymbol{u}_{t+k|t}\right)+\ell_{f}\left(\boldsymbol{x}_{t+N|t}\right) \IEEEeqnarraynumspace
\end{IEEEeqnarray}
where ${\boldsymbol{U}}(t)=\begin{bmatrix} \boldsymbol u_{t|t}^T & \boldsymbol u_{t+1|t}^T & \cdots & \boldsymbol u_{t+N-1|t}^T \end{bmatrix}^T \in \mathbb R^{Nm}$ is the sequence of control inputs over the prediction horizon $N$.

Therefore the following nonlinear MPC problem can be formulated
as a constrained finite horizon nonlinear quadratic optimal control problem at time $t$ as~\eqref{eq:generalProblem}.

\begin{IEEEeqnarray*}{rCl}
	\label{eq:generalProblem}
	\operatorname*{min}_{\boldsymbol{U}} & \quad & J(\boldsymbol x(t), U(t)) \\
	\operatorname {s.t.} &\quad & \boldsymbol{x}_{t+k+1|t}=f(\boldsymbol x_{t+k|t}, \boldsymbol u_{t+k|t}), k \in \mathbb{N}_{N-1} \\
	 &\quad& \boldsymbol{x}_{t|t}=\boldsymbol{x}(t) \\
	 &\quad& \boldsymbol{x}_{t+k|t}\in \mathcal X, \boldsymbol{u}_{t+k|t}\in \mathcal U \\
	 &\quad& {\boldsymbol{\kappa}_{i,k}^T \left(\boldsymbol{x}_{t+k|t}-\Pi_{\mathcal I_{i,k}}\left(\boldsymbol{x}_{t+k|t}\right)\right)}  \\
	 &\quad&  \quad \geq  {\sqrt{2\boldsymbol{\kappa}_{i,k}^T \boldsymbol\Sigma_{i,k} \boldsymbol{\kappa}_{i,k}}\operatorname{erf}^{-1}(1-2\varphi)} \\
	 &\quad& ||L\boldsymbol x_{t+k|t}-L{{\boldsymbol x}_{\mathrm{obs}}}_{t+k}|| \geq d_{\text{safe}} \\
	 &\quad& \boldsymbol{x}_N \in \mathcal X_f \yesnumber
\end{IEEEeqnarray*}
where the terminal constraint region $\mathcal X_f$ is a polytope.

\subsection{Stability Analysis}
\label{subsubsection:stability_analysis}
Ahead of the stability analysis suitably characterizing the performance of the proposed methodology,  the following assumptions, definitions, and lemmas are introduced to
provide a sufficient condition to prove the stability.
Here, we focus on the conditions for the uniform asymptotical stability of the origin of system~\eqref{eq:dynamics_general} under the cost function~\eqref{eq:cost_added}, physical limitation constraint~\eqref{eq:constr_bound_set} and terminal region $\mathcal X_f$, with the feedback control law.
\begin{assumption}
\label{assumption1}
	Consider the problem~\eqref{eq:generalProblem}. If the initial problem at time $0$ is feasible, then the problem at time $t$ is feasible for all $t>0$. (It may be noted that this is a reasonable assumption pertaining to the posed physical system being a typical actual system where actual real-world solutions can be admissible. This should be expected of typical actual systems.)
\end{assumption}
\begin{definition}
\label{definition:postive-definite}
	A continuous function $V(t,\boldsymbol{x})$ is a locally positive definite function, if $V(t,0)=0$ and
	\begin{IEEEeqnarray}{rC}
	V(t,\boldsymbol{x})\geq \gamma(|\boldsymbol{x}|), & \forall \boldsymbol{x}\in \mathbb B_d, t\geq 0	
	\end{IEEEeqnarray}
	where $\mathbb B_d$ is a ball centred in the origin with radius $d$, and the function $\gamma(\cdot)$ is continuous and strictly increasing with $\gamma(0)=0$.
\end{definition}
\begin{definition}
\label{definition:decrescent}
	A function $V(t,\boldsymbol{x})$ is decrescent with $\boldsymbol{x}\in \mathcal X$, if there exists a function $\epsilon(\cdot)$ such that $V(t,\boldsymbol{x}(t)) \leq \epsilon(|\boldsymbol{x}|)$, where function $\epsilon(\cdot)$ is continuous and  strictly increasing with $\epsilon(0)=0$.
\end{definition}

\begin{lemma}
\label{lemma_imp}
	Consider the system~\eqref{eq:dynamics_general} with $L\boldsymbol{x}(t)=\boldsymbol{p}(t)$, and let a continuous function $V_N(t,\boldsymbol{x}(t))$ be stated as the value function. The origin of system is locally uniformly asymptotically stable, if $V_N(t,\boldsymbol{x}(t))$ is a local positive definite function.
\end{lemma}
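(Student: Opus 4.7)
The plan is to verify the hypotheses of the standard Lyapunov theorem for time-varying discrete-time systems, using the value function $V_N(t,\boldsymbol{x}(t))$ as the candidate Lyapunov function. Positive definiteness is already assumed, so the two remaining ingredients are: (i) $V_N$ is decrescent in the sense of Definition~\ref{definition:decrescent}, and (ii) the forward-time difference $\Delta V_N(t,\boldsymbol{x}(t)) := V_N(t+1,\boldsymbol{x}(t+1))-V_N(t,\boldsymbol{x}(t))$ is negative definite along closed-loop trajectories. These three properties together imply local uniform asymptotic stability of the origin under the receding-horizon feedback.

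First, I would construct a feasible (but in general suboptimal) control sequence at time $t+1$ by the classical MPC shift-append trick: take $\tilde{\boldsymbol{U}}(t+1) = [\boldsymbol{u}_{t+1|t}^{*T},\ldots,\boldsymbol{u}_{t+N-1|t}^{*T},\boldsymbol{u}_{N}^{T}]^{T}$, where $\boldsymbol{u}_{N}$ is an auxiliary stabilizing feedback that keeps the predicted state inside the terminal polytope $\mathcal{X}_f$. Recursive feasibility of this candidate follows from Assumption~\ref{assumption1} together with positive invariance of $\mathcal{X}_f$. Evaluating $V_N$ at this candidate and using optimality at time $t+1$, the telescoping of stage costs produces a bound of the form
\begin{equation*}
\Delta V_N(t,\boldsymbol{x}(t)) \leq -\ell\bigl(\boldsymbol{x}_{t|t},\boldsymbol{u}_{t|t}^{*}\bigr) + \bigl[\ell_f(\boldsymbol{x}_{t+N+1|t+1}) - \ell_f(\boldsymbol{x}_{t+N|t}) + \ell(\boldsymbol{x}_{t+N|t},\boldsymbol{u}_{N})\bigr].
\end{equation*}
Choosing $\boldsymbol{P}$ so that $\ell_f$ is a control Lyapunov function on $\mathcal{X}_f$, namely so that the bracketed term is nonpositive, gives $\Delta V_N \leq -\ell(\boldsymbol{x}(t),\boldsymbol{u}^{*}(t)) \leq -\lambda_{\min}(\boldsymbol{Q})\|\boldsymbol{x}(t)\|^{2}$, which is strictly negative off the origin.

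Second, the decrescent property follows from continuity of $V_N$ together with continuity of $f$ and the quadratic structure of $\ell,\ell_f$. On a compact neighbourhood of the origin where the obstacle chance constraints are inactive, one can bound $V_N$ above by the cost accumulated along any admissible auxiliary feedback trajectory; this yields an upper bound of the form $V_N(t,\boldsymbol{x})\leq \epsilon(\|\boldsymbol{x}\|)$ for a continuous, strictly increasing $\epsilon$ with $\epsilon(0)=0$, matching Definition~\ref{definition:decrescent}. Combining positive definiteness (by hypothesis), decrescence, and the strict decrease of $V_N$ along the closed loop, the standard time-varying Lyapunov theorem for discrete-time systems then delivers local uniform asymptotic stability of the origin of~\eqref{eq:dynamics_general}.

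The main obstacle I anticipate is the decrease step, since it implicitly requires compatibility between the terminal penalty $\boldsymbol{P}$ and the terminal region $\mathcal{X}_f$ via an auxiliary stabilizing law; this is a standing structural hypothesis in MPC stability theory that must either be invoked explicitly or folded into the construction of $(\boldsymbol{P},\mathcal{X}_f)$. A secondary difficulty is the non-convexity introduced by the ellipsoidal obstacle chance constraints of Theorem~\ref{thm:1}: this is handled by restricting the analysis to a local neighbourhood of the origin on which these constraints are inactive, so that the feasible set reduces to the convex intersection of $\mathcal{X}$ and the terminal constraint and the standard MPC stability machinery applies verbatim.
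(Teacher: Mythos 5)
Your proposal is mathematically reasonable as a sketch of the standard MPC stability argument, but it is aimed at a different target than this lemma, and it takes a genuinely different route from the paper. The paper's own proof of Lemma~\ref{lemma_imp} is a one-line citation to a nonlinear-systems textbook: the lemma is intended purely as the abstract Lyapunov-type statement (a continuous, locally positive definite, decrescent value function whose difference is negative definite yields local uniform asymptotic stability), and the paper supplies the remaining ingredients elsewhere --- decrescence in Lemma~\ref{lemma_prove} and the decrease of $V_N$ in Theorem~\ref{theorem_imp}. What you have written is essentially a proof of that downstream composite result, not of the cited lemma. Moreover, the mechanism by which you secure the decrease differs from the paper's: you invoke the classical terminal-ingredients machinery (shift-append candidate, an auxiliary stabilizing law on $\mathcal{X}_f$, and $\boldsymbol{P}$ chosen so that $\ell_f$ is a control Lyapunov function there), which requires structural hypotheses the paper never assumes; the paper instead takes $\mathcal{X}_f=\{0\}$ and enforces the decrease by adding the explicit runtime condition~\eqref{eq:stability_condition} as a constraint in the online optimization, which absorbs the mismatch terms $\sum_{k}\|\boldsymbol{x}_{t+k|t}-\boldsymbol{x}_{t+k|t-1}\|^2_{\boldsymbol{Q}}$ arising from re-linearization. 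Your approach buys the textbook guarantee under offline-verifiable terminal conditions; the paper's buys applicability to its time-varying, chance-constrained setting at the price of an extra online constraint whose feasibility is simply assumed. One further point worth noting: you correctly observe that positive definiteness of $V_N$ alone cannot imply asymptotic stability, which is an implicit gap in the lemma as literally stated --- but filling that gap is the job of Lemma~\ref{lemma_prove} and Theorem~\ref{theorem_imp}, so your additions, while sound, duplicate material the paper places elsewhere rather than proving this statement as the paper understands it.
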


\begin{proof}
	The proof of this result stated above is shown in~\cite{sastry2013nonlinear}.
\end{proof}
\begin{lemma}
\label{lemma_prove}
 For the system~\eqref{eq:dynamics_general}, with the stage cost function $\ell(\cdot)$ and terminal cost function $\ell_f(\cdot)$, the value function $V_N(t,\boldsymbol{x}(t))$ is a decrescent function in the domain $\mathcal X$ for $\boldsymbol{x}(t)\in \mathcal X$.
\end{lemma}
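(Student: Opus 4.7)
The plan is to upper bound $V_N(t,\boldsymbol{x}(t))$ by exhibiting a suboptimal but feasible control sequence whose cost is majorized by a continuous, strictly increasing function of $|\boldsymbol{x}(t)|$ vanishing at the origin. Since $V_N$ is the pointwise minimum of $J(\boldsymbol{x}(t),\boldsymbol{U}(t))$ over admissible $\boldsymbol{U}(t)$, any feasible candidate automatically yields the required upper bound, so the task reduces to choosing a convenient candidate and estimating its cost.

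First I would fix a linear stabilizing feedback $\boldsymbol{u}_{t+k|t}=\boldsymbol{K}\boldsymbol{x}_{t+k|t}$ such that the closed-loop linearization of~\eqref{eq:dynamics_general} around the origin is exponentially stable and steers the predicted state into the polytopic terminal region $\mathcal{X}_f$ within $N$ steps. Under Assumption~\ref{assumption1}, the constraint set in~\eqref{eq:generalProblem} is nonempty on a neighborhood of the origin contained in $\mathcal{X}$, on which this feedback respects the input bound $\mathcal{U}$, the state bound $\mathcal{X}$, the safety clearance $d_{\text{safe}}$, the chance constraints~\eqref{eq:new chance constraint}, and the terminal inclusion $\boldsymbol{x}_N\in\mathcal{X}_f$.

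Second I would propagate the exponential decay induced by $\boldsymbol{K}$ to obtain bounds of the form $|\boldsymbol{x}_{t+k|t}|\le M\rho^{k}|\boldsymbol{x}(t)|$ and $|\boldsymbol{u}_{t+k|t}|\le M\|\boldsymbol{K}\|\rho^{k}|\boldsymbol{x}(t)|$ for constants $M\ge 1$ and $\rho\in[0,1)$. The quadratic structure of the stage cost~\eqref{eq:stage_cost_general} and terminal cost~\eqref{eq:terminal cost} then gives $\ell(\boldsymbol{x}_{t+k|t},\boldsymbol{u}_{t+k|t})\le c_k|\boldsymbol{x}(t)|^2$ and $\ell_f(\boldsymbol{x}_{t+N|t})\le c_f|\boldsymbol{x}(t)|^2$, where the constants depend only on $\boldsymbol{Q}$, $\boldsymbol{R}$, $\boldsymbol{P}$, $\boldsymbol{K}$, $M$, and $\rho$. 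Summing the resulting geometric series in~\eqref{eq:cost_added} yields $J(\boldsymbol{x}(t),\boldsymbol{U}(t))\le c\,|\boldsymbol{x}(t)|^2$ for a uniform $c>0$ independent of $t$. Setting $\epsilon(s):=c\,s^2$ defines a continuous, strictly increasing function with $\epsilon(0)=0$ satisfying $V_N(t,\boldsymbol{x}(t))\le\epsilon(|\boldsymbol{x}(t)|)$, which is precisely Definition~\ref{definition:decrescent}.

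The main obstacle is extending the admissibility argument from a neighborhood of the origin to the full domain $\mathcal{X}$. For states far from the origin a single linear gain may violate $\mathcal{U}$, the safety distance, or the chance constraints, so the closed-form geometric estimate above cannot be invoked directly. To close this gap I would rely on Assumption~\ref{assumption1} together with boundedness of $\mathcal{X}$ and $\mathcal{U}$ and continuity of $f$, $\ell$, and $\ell_f$: these imply that the optimal value $V_N(t,\boldsymbol{x})$ is finite and continuous in $\boldsymbol{x}$ on $\mathcal{X}$ uniformly in $t$, with $V_N(t,0)=0$ attained by the trivial input $\boldsymbol{u}\equiv 0$ at the equilibrium; the two bounds (local quadratic near the origin, uniform boundedness on $\mathcal{X}$) can then be combined into a single continuous strictly increasing majorant $\epsilon(|\boldsymbol{x}|)$.
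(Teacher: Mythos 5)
Your argument is correct in substance but follows a genuinely different route from the paper. The paper's proof rewrites $V_N(t,\boldsymbol{x}(t))$ as a quadratic form $\begin{bmatrix}\boldsymbol{x}^T(t) & \boldsymbol{U}^T(t)\end{bmatrix}\boldsymbol{\Xi}\begin{bmatrix}\boldsymbol{x}(t) \\ \boldsymbol{U}(t)\end{bmatrix}$, argues that $\boldsymbol{x}$, $\boldsymbol{U}$, $\boldsymbol{A}_t$, $\boldsymbol{B}_t$ (hence $\boldsymbol{\Xi}$) are all bounded because $\mathcal X$ and $\mathcal U$ are compact and the partials of $f$ are bounded, and then invokes a cited explicit-MPC result to assert the existence of the majorant $\epsilon(\cdot)$; it never constructs a candidate input sequence. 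You instead use the classical feasible-candidate argument: a locally stabilizing feedback gives geometric decay estimates, hence $V_N(t,\boldsymbol{x}(t))\le c\,|\boldsymbol{x}(t)|^2$ near the origin, which you then splice with a uniform bound on the rest of $\mathcal X$. Your route is more constructive and, arguably, more honest about what decrescence requires: mere boundedness of the quadratic form on $\mathcal X$ does not by itself yield $V_N(t,\boldsymbol{x})\le\epsilon(|\boldsymbol{x}|)$ with $\epsilon(0)=0$ — one also needs $V_N(t,\boldsymbol{x})\to 0$ as $\boldsymbol{x}\to 0$, which is exactly what your local quadratic bound delivers and what the paper delegates to the citation. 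The price you pay is an extra hypothesis the paper never states (local stabilizability of~\eqref{eq:dynamics_general} with the gain respecting $\mathcal U$, the safety distance, the chance constraints~\eqref{eq:new chance constraint}, and $\mathcal X_f$ in a neighborhood of the origin), and your claim that $V_N(t,\cdot)$ is continuous on all of $\mathcal X$ uniformly in $t$ is itself nontrivial for a nonconvex constrained program — though that gap is no worse than the corresponding leap in the paper's own proof.
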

\begin{proof}
The value function is
\begin{IEEEeqnarray*}{rCl}
\label{eq:value_func}
V_N(t,\boldsymbol{x}(t)) & = & \sum\limits_{k=0}^{N-1}\left( \left\|\boldsymbol{x}_{t+k|t} \right\|^2_{\boldsymbol{Q}} + \left\|\boldsymbol{u}_{t+k|t}\right\|^2_{\boldsymbol{R}} \right) + \left\|\boldsymbol{x}_{t+N|t}\right\|^2_{\boldsymbol{P}} \\ \yesnumber
\end{IEEEeqnarray*}
which can be rewritten as
\begin{IEEEeqnarray}{rCl}
	V_N(t,\boldsymbol{x}(t)) & = &
	\begin{bmatrix} {\boldsymbol{x}}^T(t) & {\boldsymbol{U}}^T(t) \end{bmatrix}
	\boldsymbol{\Xi}
	\begin{bmatrix} {\boldsymbol{x}}(t) \\ {\boldsymbol{U}}(t) \end{bmatrix}
\end{IEEEeqnarray}
 where the matrix $\boldsymbol{\Xi}\in \mathbb R^{(n+Nm)\times(n+Nm)}$ can be determined by $\boldsymbol{Q}, \boldsymbol{R}, \boldsymbol{P}, \boldsymbol{A}_t, \boldsymbol{B}_t$. The components of ${\boldsymbol{u}}(t)$ are bounded since it is a solution of problem~\eqref{eq:generalProblem} and the set $\mathcal U$ is compact. The same property can also apply to ${\boldsymbol{x}}(t)$. The partial derivative of $f(\boldsymbol{x}, \boldsymbol{u})$ with respect to $\boldsymbol{x}$ and $\boldsymbol{u}$ are bounded, and thus matrices $\boldsymbol{A}_t,\boldsymbol{B}_t$ are bounded. Therefore, $V_N(\cdot,\cdot)$ is bounded and we can always find a positive definite function $\epsilon(\cdot)$ such that $V_N(t,\boldsymbol{x}(t)) \leq \epsilon(\boldsymbol{x}), \forall \boldsymbol{x} \in \mathcal X$~\cite{bemporad2002explicit}. According to Definition~\ref{definition:decrescent}, the value function $V_N(t,\boldsymbol{x}(t))$ is a decrescent function in the domain $\mathcal X$ for $\boldsymbol{x}(t)\in \mathcal X$. This completes the proof of the result.
\end{proof}

In the following, we use $V_N(t,\boldsymbol{x}(t))$ as a Lyapunov function to find a sufficient condition to prove the stability.
\begin{theorem}
\label{theorem_imp}
	Consider the system~\eqref{eq:dynamics_general} with the terminal constraint $\mathcal X_f=0$, physical limitation constraint~\eqref{eq:constr_bound_set} and the cost function~\eqref{eq:cost_added}. Let $\Delta V_N(t,\boldsymbol{x}(t))=V_N(t-1,\boldsymbol{x}(t-1))-V_N(t,\boldsymbol{x}(t))$, then the function $\Delta V_N(\cdot,\boldsymbol{x}(\cdot))$ is a locally positive definite function if
	\begin{IEEEeqnarray*}{lCl}
	\label{eq:stability_condition}
	\ell(\boldsymbol{x}_{t+N-1|t},\boldsymbol{u}_{t+N-1|t}) < \ell(\boldsymbol{x}_{t-1|t-1},\boldsymbol{u}_{t-1|t-1})
	\\
	\quad - \sum\limits_{i=0}^{N-2} \left\|\boldsymbol{x}_{t+i|t}-\boldsymbol{x}_{t+i|t-1})\right\|^2_{\boldsymbol{Q}}\yesnumber
	\end{IEEEeqnarray*}
	Then it follows that the origin of the closed-loop system is uniformly, locally asymptotically stable.
\end{theorem}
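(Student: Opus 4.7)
The plan is to verify the three classical Lyapunov ingredients for $V_N(t,\boldsymbol{x}(t))$ and then invoke Lemma~\ref{lemma_imp}. First, positive definiteness of $V_N$ follows essentially for free: since $\boldsymbol{Q}, \boldsymbol{R}, \boldsymbol{P} \succ 0$, the stage cost and terminal cost satisfy $\ell(0,0)=0$, $\ell_f(0)=0$, and $V_N(t,\boldsymbol{x}(t)) \ge \|\boldsymbol{x}(t)\|^2_{\boldsymbol{Q}} \ge \lambda_{\min}(\boldsymbol{Q})\|\boldsymbol{x}(t)\|^2$, which provides a class-$\mathcal{K}$ lower bound via the candidate $\gamma(s)=\lambda_{\min}(\boldsymbol{Q})s^2$ in Definition~\ref{definition:postive-definite}. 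Decrescence is already supplied by Lemma~\ref{lemma_prove}. Recursive feasibility along the trajectory is in place by Assumption~\ref{assumption1}.

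The main work is therefore to show that $\Delta V_N(t,\boldsymbol{x}(t))$ is locally positive definite whenever the bound~\eqref{eq:stability_condition} holds. I would expand
\begin{IEEEeqnarray*}{rCl}
V_N(t-1,\boldsymbol{x}(t-1)) & = & \ell(\boldsymbol{x}_{t-1|t-1},\boldsymbol{u}_{t-1|t-1}) + \sum_{i=0}^{N-2}\ell(\boldsymbol{x}_{t+i|t-1},\boldsymbol{u}_{t+i|t-1}) + \ell_f(\boldsymbol{x}_{t+N-1|t-1}), \\
V_N(t,\boldsymbol{x}(t)) & = & \sum_{i=0}^{N-2}\ell(\boldsymbol{x}_{t+i|t},\boldsymbol{u}_{t+i|t}) + \ell(\boldsymbol{x}_{t+N-1|t},\boldsymbol{u}_{t+N-1|t}) + \ell_f(\boldsymbol{x}_{t+N|t}),
\end{IEEEeqnarray*}
use the terminal constraint $\mathcal{X}_f=\{0\}$ to kill both terminal-cost terms, and subtract. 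After reindexing, this yields $\Delta V_N$ as the difference between $\ell(\boldsymbol{x}_{t-1|t-1},\boldsymbol{u}_{t-1|t-1})$ and $\ell(\boldsymbol{x}_{t+N-1|t},\boldsymbol{u}_{t+N-1|t})$ plus a telescoping discrepancy term $\sum_{i=0}^{N-2}\big[\ell(\boldsymbol{x}_{t+i|t-1},\boldsymbol{u}_{t+i|t-1}) - \ell(\boldsymbol{x}_{t+i|t},\boldsymbol{u}_{t+i|t})\big]$ between the trajectory planned at $t-1$ and the one re-optimized at $t$.

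To bound that discrepancy I would use the standard MPC argument: the shifted-tail sequence $\{\boldsymbol{u}_{t|t-1},\dots,\boldsymbol{u}_{t+N-2|t-1},\boldsymbol{u}_{t+N-1|t}\}$ is feasible for the problem at time $t$, so the optimal $V_N(t,\boldsymbol{x}(t))$ cannot exceed the cost of this feasible candidate; the deviation between the two predicted trajectories is then exactly $\sum_{i=0}^{N-2}\|\boldsymbol{x}_{t+i|t}-\boldsymbol{x}_{t+i|t-1}\|^2_{\boldsymbol{Q}}$ in the state part. Substituting this upper bound into the expression for $\Delta V_N$ produces
\[
\Delta V_N(t,\boldsymbol{x}(t)) \;\ge\; \ell(\boldsymbol{x}_{t-1|t-1},\boldsymbol{u}_{t-1|t-1}) - \ell(\boldsymbol{x}_{t+N-1|t},\boldsymbol{u}_{t+N-1|t}) - \sum_{i=0}^{N-2}\|\boldsymbol{x}_{t+i|t}-\boldsymbol{x}_{t+i|t-1}\|^2_{\boldsymbol{Q}},
\]
and condition~\eqref{eq:stability_condition} is precisely what forces this right-hand side to be strictly positive whenever $\boldsymbol{x}(t)\ne 0$. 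Since $\Delta V_N(t,0)=0$ and $\Delta V_N$ is bounded below by a continuous, strictly-increasing, zero-at-origin function of $\|\boldsymbol{x}(t)\|$ on a neighborhood of the origin, it is locally positive definite in the sense of Definition~\ref{definition:postive-definite}.

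With $V_N$ positive definite, decrescent (Lemma~\ref{lemma_prove}), and $\Delta V_N$ locally positive definite, Lemma~\ref{lemma_imp} (a discrete-time Lyapunov theorem from \cite{sastry2013nonlinear}) closes the argument and delivers uniform local asymptotic stability of the origin. The genuine obstacle in the proof is the discrepancy-bound step above: controlling $\ell(\boldsymbol{x}_{t+i|t-1},\boldsymbol{u}_{t+i|t-1}) - \ell(\boldsymbol{x}_{t+i|t},\boldsymbol{u}_{t+i|t})$ cleanly by $-\|\boldsymbol{x}_{t+i|t}-\boldsymbol{x}_{t+i|t-1}\|^2_{\boldsymbol{Q}}$ through the feasible-tail construction is subtler than the superficially quadratic form suggests, because the re-optimized controls at time $t$ need not coincide with the shifted ones, and additional slack may have to be absorbed into the margin on the right-hand side of~\eqref{eq:stability_condition}.
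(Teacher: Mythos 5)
Your proposal follows essentially the same route as the paper's proof: the same shifted-tail feasible sequence $\{\boldsymbol{u}_{t|t-1},\dots,\boldsymbol{u}_{t+N-2|t-1},\boldsymbol{u}_{t+N-1|t}\}$, the same cancellation of terminal terms, the same lower bound $\Delta V_N \ge \ell(\boldsymbol{x}_{t-1|t-1},\boldsymbol{u}_{t-1|t-1}) - \ell(\boldsymbol{x}_{t+N-1|t},\boldsymbol{u}_{t+N-1|t}) - \sum_{k=0}^{N-2}\|\boldsymbol{x}_{t+k|t}-\boldsymbol{x}_{t+k|t-1}\|^2_{\boldsymbol{Q}}$, and the same appeal to Lemma~\ref{lemma_prove} and Lemma~\ref{lemma_imp}. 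The ``subtle'' discrepancy-bound step you flag at the end is precisely the step the paper asserts directly (bounding $\sum_k(\|\boldsymbol{x}_{t+k|t-1}\|^2_{\boldsymbol{Q}}-\|\boldsymbol{x}_{t+k|t}\|^2_{\boldsymbol{Q}})$ from below by $-\sum_k\|\boldsymbol{x}_{t+k|t}-\boldsymbol{x}_{t+k|t-1}\|^2_{\boldsymbol{Q}}$) without further justification, so your attempt is faithful to, and no less rigorous than, the published argument.
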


\begin{proof}
	Recall that the value function at time $t$ is
	\begin{IEEEeqnarray*}{rCl}
		V_N(t,\boldsymbol{x}(t)) &=& \sum\limits_{k=0}^{N-1} \left( \left\|\boldsymbol{x}_{t+k|t}\right\|^2_{\boldsymbol{Q}} + \left\|\boldsymbol{u}_{t+k|t}\right\|^2_{\boldsymbol{R}}\right) + \left\| \boldsymbol{x}_{t+N|t}^T \right\|^2_{\boldsymbol{P}} \\ \yesnumber
	\end{IEEEeqnarray*}
	where $\boldsymbol{u}_{t+k,t}$ with $k=0,1,\cdots, N-1$ is the solution of the optimization problem at time $t$. The state of the system at time $t$ is $\boldsymbol{x}(t)=f(\boldsymbol{x}(t-1),u_{t-1|t-1})$. At time $t-1$, we have the solution ${\boldsymbol{U}}_{t-1} = \begin{bmatrix} u_{t-1|t-1}^T & u_{t|t-1}^T & \cdots  & u_{t+N-2|t-1}^T \end{bmatrix}^T$. Set the sequence $\hat{\boldsymbol{U}}_{t} = \begin{bmatrix} u_{t|t-1}^T & u_{t+1|t-1}^T & \cdots  & u_{t+N-2|t-1}^T & u_{t+N-1|t}^T \end{bmatrix}^T$ which is feasible for the problem, and is obtained from ${\boldsymbol{U}}_{t-1}$ by removing $u_{t-1|t-1}$ and adding $u_{t+N-1|t}$. 
	The function $\Delta V_N(t,\boldsymbol{x}(t))$ can be rewritten as
	\begin{IEEEeqnarray*}{rCl}
	\label{eq:condition_prove}
		&& \Delta V_N(t,\boldsymbol{x}(t)) \\
		&=& \sum\limits_{k=0}^{N-1} \left\| \boldsymbol{x}_{t+k-1|t-1} \right\|^2_{\boldsymbol{Q}} + \sum\limits_{k=0}^{N-1} \left\|\boldsymbol{u}_{t+k-1|t-1}\right\|^2_{\boldsymbol{R}}\\
		&&  + \left\|\boldsymbol{x}_{t+N-1|t-1}\right\|^2_{\boldsymbol{P}} - \sum\limits_{k=0}^{N-1}\left\|\boldsymbol{x}_{t+k|t}\right\|^2_{\boldsymbol{Q}} - \sum\limits_{k=1}^{N-1} \left\|\boldsymbol{u}_{t+k-1|t-1}\right\|^2_{\boldsymbol{R}}\\
		&&  - \left\|\boldsymbol{u}_{t+N-1|t}\right\|^2_{\boldsymbol{R}} - \left\|\boldsymbol{x}_{t+N|t}\right\|^2_{\boldsymbol{P}}\\
		&=& \sum\limits_{k=0}^{N-2} \left(\left\|\boldsymbol{x}_{t+k|t-1} \right\|^2_{\boldsymbol{Q}} - \left\|\boldsymbol{x}_{t+k|t} \right\|^2_{\boldsymbol{Q}} \right) + \left\|\boldsymbol{x}_{t-1|t-1}\right\|^2_{\boldsymbol{Q}} \\
		&&  + \left\|\boldsymbol{u}_{t-1|t-1}\right\|^2_{\boldsymbol{R}} - \left\|\boldsymbol{x}_{t+N-1|t}\right\|^2_{\boldsymbol{Q}} - \left\|\boldsymbol{u}_{t+N-1|t}\right\|^2_{\boldsymbol{R}}\\
		&\geq & -\sum\limits_{k=0}^{N-2} \left\|\boldsymbol{x}_{t+k|t} - \boldsymbol{x}_{t+k|t-1} \right\|^2_{\boldsymbol{Q}} \\
		&& - \ell\left(\boldsymbol{x}_{t+N-1|t},\boldsymbol{u}_{t+N-1|t}\right) + \ell\left(\boldsymbol{x}_{t-1|t-1},\boldsymbol{u}_{t-1|t-1}\right)  \yesnumber
	\end{IEEEeqnarray*}
	If the condition~\eqref{eq:stability_condition} holds, $\Delta V_N(t,\boldsymbol{x}(t))$ is a locally positive definite function since the right hand side of~\eqref{eq:condition_prove} is positive and bounded. Then, the function $\Delta V_N(t,\boldsymbol{x}(t))$ is a locally positive definite function and it is a decrescent function based on Lemma~\ref{lemma_prove}. According then to Lemma~\ref{lemma_imp}, the origin of the system is uniformly asymptotically stable.
\end{proof}

\begin{remark}
	The condition~\eqref{eq:stability_condition} is established for the stated nonlinear system above, and it directly leads to an additional convex constraint to be embedded and incorporated in the required MPC design and algorithm. With all of these in place at this point, this then is an appropriate applicable stability result for the methodology proposed here.
\end{remark}

\subsection{Proposed Algorithm}
\label{section:problem solving}
At time $t$, the cost function is optimized under the constraints in~\eqref{eq:generalProblem} to obtain the optimal control sequence $\boldsymbol{U}(t)$.
It is worthwhile to mention that only the first control input $\boldsymbol u_{t|t}$ will be executed. Multiple shooting method~\cite{multiple_shooting} is used to solve this nonlinear optimization problem with multiple constraints~\eqref{eq:generalProblem}. First, discretize the system dynamics and constraints at each time $t$ over a coarse discrete time grid $k=0,1,\cdots,N$ with sampling step $\Delta t$. For each time $t$, a boundary value problem is solved with imposing some additional continuity constraints. 
This problem can be expressed as a nonlinear program which can be solved using sequential quadratic programming. The interior-point method or the active set method can be applied to solve the corresponding quadratic program. 

Here, Algorithm~\ref{alg:NMPC} is used to obtain the optimal control inputs at time $t$, and thus the optimal control sequence can be obtained. If there is no solution of this problem but solving time $t_\text{comp}$ is less than the predefined maximum computation time limits $t_{\max}$, a nonnegative slack vector $\Theta_k \in \mathbb R^{n_x}$ will be added to soften the inequality constraints~\cite{kerrigan2000soft}, as shown in \eqref{eq:relax_ineq_constr}.
\begin{IEEEeqnarray*}{rCl}
\label{eq:relax_ineq_constr}
\boldsymbol H_x \boldsymbol x_{t+k|t} & \leq & \boldsymbol h_x + \Theta_k \\
\Theta_k & \succeq & 0 \yesnumber
\end{IEEEeqnarray*}
where the symbol $\succeq$ means element-wise no less than 0, the subscript $\cdot_k$ denotes each step in prediction horizon of the MPC problem with $k\in \mathbb N_{N-1}$, and $\boldsymbol H_x, \boldsymbol h_x$ are arbitrary given matrix and vector. The value of the slack vector relies on the degree of associated acceptable violation of the constraints.

The slack vector $\tilde{\Theta}_k$ can also be added in the equality constraints to transform the equality constraints into tube-like constraints, as shown in \eqref{eq:relax_eqconstr}.
\begin{IEEEeqnarray*}{C}
\label{eq:relax_eqconstr}
f(\boldsymbol x_{t+k|t}, \boldsymbol u_{t+k|t}) - \tilde{\Theta}_k \leq  \boldsymbol{x}_{t+k+1|t} \leq f(\boldsymbol x_{t+k|t}, \boldsymbol u_{t+k|t}) + {\tilde{\Theta}}_k \\
{\tilde{\Theta}}_k  \succeq  0 \yesnumber
\end{IEEEeqnarray*}

Moreover, the slack vector $\hat{\Theta}_k$ can also be added to the cost function as a scalar weight to ensure that the slacking is not abused~\cite{kerrigan2000soft}, as shown in \eqref{eq:relax_cost}.
\begin{IEEEeqnarray*}{rCl}
\label{eq:relax_cost}
\operatorname*{minimize}_{\boldsymbol x} &&\quad \sum_{k=0}^{N-1} \left( \left\|\boldsymbol x_{t+k|t}\right\|^2_{\boldsymbol Q} +\left\|\boldsymbol u_{t+k|t}\right\|^2_{\boldsymbol R} +\rho \hat{\Theta}_{k}^{T} \hat{\Theta}_{k}\right) \\
&& + \left\|\boldsymbol x_{t+N|t}\right\|^2_{\boldsymbol P}  \yesnumber
\end{IEEEeqnarray*}
where $\rho$ is the penalty of the slack vector.
\begin{remark}
Adding slacking vectors on control input constraints is not reasonable, as the inputs often originate from an actuator which has hard limits constraining the force, torque etc.
\end{remark}

If there is no feasible solution or the computation time to solve this problem $t_\text{comp}$ exceeds the maximum computation time limits $t_{\max}$, a backup controller will be invoked, thereby continuing the control progress. For example, a conservatively tuned PID controller can be used which sacrifices performance for relaxed constraints satisfaction. An alternative way is to execute the control inputs at the upper/lower bound.

\begin{algorithm}[!thbp]
	\caption{Numerical procedures for nonlinear MPC problem at time $t$ with chance constraints} \label{alg:NMPC}
	\begin{algorithmic}[1]
		\STATE Initialize state vector of the agent $\boldsymbol x_0$; Initialize the number of repeats $n_\textup{rep} = 1$; Set both of the maximum number of repeats $n_\textup{set}$ and the maximum computation time $t_{\max}$ to the reasonable values.
		\STATE Compute the initial value of the cost function by \eqref{eq:cost_added}.
		\FOR {$k=0, 1, \cdots, N-1$}
		\STATE Compute $\boldsymbol x_{t+k+1|t}$ based on $\boldsymbol x_{t+k|t}$ using \eqref{eq:dynamics_general}.
		\STATE Generate the constraints of the physical limitations by \eqref{eq:constr_bound_set}.
		\STATE Generate the constraints of collision avoidance to moving obstacles by \eqref{eq:new chance constraint}.
		\STATE Generate the constraints of collision avoidance to static obstacles by \eqref{constr:general dist_static}.
		\STATE {Generate the additional constraint based on the stability condition~\eqref{eq:stability_condition}.}
		\ENDFOR
		\STATE Compute the cost function by \eqref{eq:cost_added}.
		\STATE Solve the nonlinear MPC problem \eqref{eq:generalProblem}.
		\IF {There exists a solution}
		\RETURN control sequence $\boldsymbol U(t)$
		\ELSE
		\STATE Add slack vectors on these constraints and cost function.
		\STATE $n_\textup{rep} = n_\textup{rep} + 1$.
		\STATE Go to line 11.
		\ENDIF
		\STATE Compute the computation time collapse $t_\text{comp}$.
		\IF {$n_\textup{rep}>n_\textup{set}$ or $t_\text{comp}>t_{\max}$}
		\STATE Turn on the backup controller.
		\ENDIF
	\end{algorithmic}
\end{algorithm}

\section{Case Study}
\label{section:case study}
Among all of the applications of trajectory generation, a case study on a UAV system would certainly be an ideal test platform for the 3D trajectory generation problem, since trajectory planning typically need to work in 3-dimensional state space with multiple degrees of freedom and multiple constraints due to its dynamical characteristics and physical limits in a typical UAV application. To demonstrate the effectiveness of our proposed method, a UAV (quadcopter) system is used as an application test platform for trajectory generation in this section.

\subsection{Plant Model}
\label{section:case_model}
The world coordinate system $\mathbb{W}$ and the robot body coordinate system $\mathbb{B}$ are shown in Fig.~\ref{fig:model}, where $x_W, y_W, z_W$ are three dimensions in the world-fixed frame and $x_B, y_B, z_B$ are in the body-fixed frame. As shown in Fig.~\ref{fig:model}, each quadcopter is equipped four rotors. For each rotor, there are a vertical force due to the rotation of the rotor and a moment perpendicular to the plane of the propeller rotation. Therefore, there are four vertical force $F_1, F_2, F_3, F_4$ to overcome the gravity and drive the quadcopter.
The dynamic model of a quadcopter can be represented by \eqref{eq:dyn} with neglecting the aerodynamic and gyroscopic effects~\cite{zhang2019integrated}.
\begin{figure}[h]
	\centering
	\includegraphics[width=0.32\textwidth]{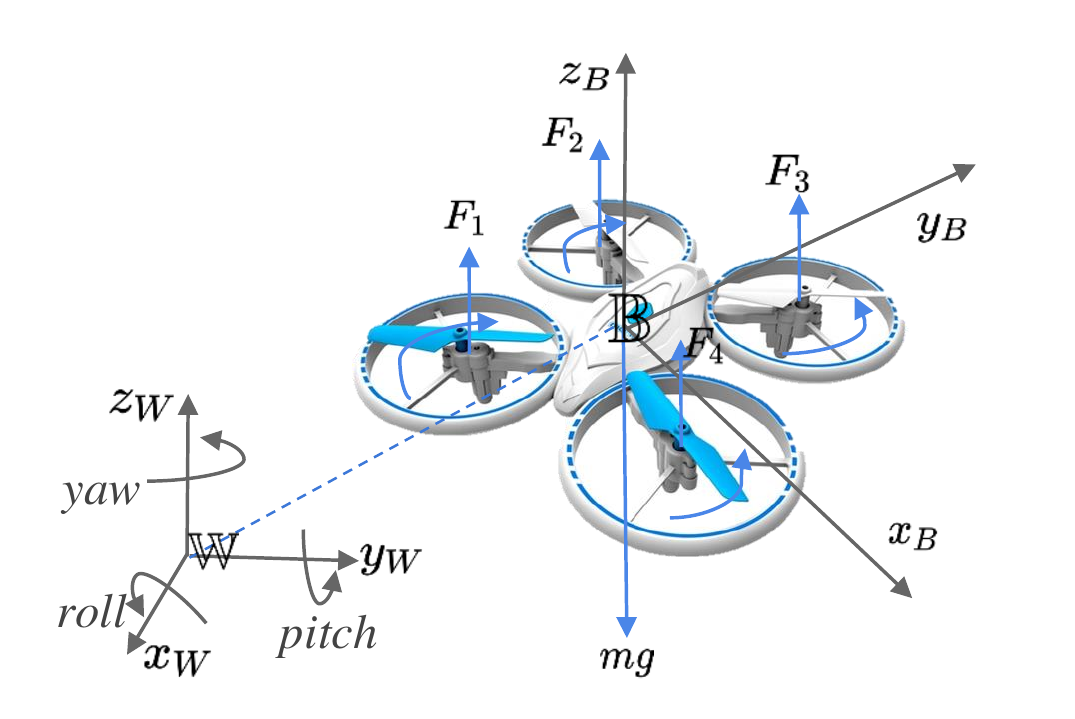}
	\caption{\label{fig:model} Model of the quadcopter.}
\end{figure}

\begin{IEEEeqnarray}{cl}
	\label{eq:dyn}
	\IEEEyesnumber \IEEEyessubnumber*
	{\dot{\boldsymbol p}(t)}  & {=\boldsymbol{R}(\phi, \theta, \psi)^T {\boldsymbol v}(t)} \label{subeq1:dyn}\\
	{\dot{\boldsymbol v}(t)}  & {=-{\boldsymbol\omega}(t) \times {\boldsymbol v}(t) +g\boldsymbol R(\phi, \theta, \psi){\boldsymbol e} + {\boldsymbol e}T/m} \label{subeq2:dyn} \\
	{\dot{\boldsymbol\zeta}(t)} & {=\boldsymbol W(\psi, \theta, \phi) {\boldsymbol\omega}(t)} \label{subeq3:dyn} \\
	{\dot{\boldsymbol\omega}(t)} & {=\boldsymbol J^{-1}(-{\boldsymbol\omega}(t)\times \boldsymbol J{\boldsymbol\omega}(t) + {\boldsymbol\tau})} \label{subeq4:dyn}
\end{IEEEeqnarray}
Here, \eqref{subeq1:dyn} models the position of quadcopter in the world coordinates ${\boldsymbol p}=\begin{bmatrix} p_x & p_y & p_z \end{bmatrix}^T \in \mathbb R^3$; \eqref{subeq2:dyn} is to represent the velocity of the quadcopter in three dimensions ${\boldsymbol v}=\begin{bmatrix} v_x & v_y & v_z\end{bmatrix}^T \in \mathbb R^3$; $\boldsymbol\zeta=\begin{bmatrix} \phi& \theta & \psi \end{bmatrix}^T \in \mathbb R^3$ denotes three angles, roll, pitch and yaw, respectively; The angular velocity in three dimensions is represented by $\boldsymbol\omega=\begin{bmatrix} \omega_x & \omega_y & \omega_z\end{bmatrix}^T \in \mathbb R^3$; ${\boldsymbol e}=\begin{bmatrix} 0 & 0 &1\end{bmatrix}^T \in \mathbb R^3$; ${\boldsymbol\tau}=\begin{bmatrix} \tau_x & \tau_y & \tau_z \end{bmatrix}^T \in \mathbb R^3$ represents the torques of the quadcopter in each dimension; $g$ is the gravitational acceleration; $m$ is the mass of this quadcopter; $T$ denotes the total thrust; $\boldsymbol J=\textup{diag}(J_x, J_y, J_z) \in \mathbb R^{3\times 3}$ denotes the moment of inertia of the quadcopter; and $\boldsymbol R(\psi, \theta, \phi)\in \mathbb R^{3\times 3}$ denotes the rotation matrix of the quadcopter
(fuller details in~\cite{zhang2019integrated}).

The rotor thrusts of the four rotors are chosen as control inputs, i.e., $\hat{\boldsymbol u} = \begin{bmatrix} F_1 & F_2 & F_3 & F_4\end{bmatrix}^T \in \mathbb R^{4}$, and then we have the relationship between individual thrusts and individual torques which is expressed by \eqref{eq:control_input}.
\begin{IEEEeqnarray}{ll}
	\label{eq:control_input}
	\begin{bmatrix}
		{T} \\ {\tau_{x}} \\ {\tau_{y}} \\ {\tau_{z}}
	\end{bmatrix} =
	\begin{bmatrix}
		{-1} & {-1} & {-1} & {-1} \\
		{0} & {-L} & {0} & {L} \\
		{L} & {0} & {-L} & {0} \\
		{-c} & {c} & {-c} & {c}
	\end{bmatrix}
	\begin{bmatrix}
		{F_{1}} \\ {F_{2}} \\ {F_{3}} \\ {F_{4}}
	\end{bmatrix}
\end{IEEEeqnarray}
where $L$ is the distance from the rotor to the center of gravity of the quadrotor and $c$ is a constant that relates the rotor angular momentum to the rotor thrust.

Define the state vector as
\begin{IEEEeqnarray}{cl}
	\IEEEnonumber
	{\boldsymbol x} &{=\begin{bmatrix} p_x \ \  p_y \ \  p_z  \ \  v_x  \ \  v_y \ \  v_z  \ \  \phi \ \   \theta \ \   \psi  \ \   \omega_x \ \    \omega_y \ \   \omega_z \end{bmatrix}^T }\\
	&{= \begin{bmatrix} \boldsymbol p^T & \boldsymbol v^T & \boldsymbol\zeta^T & \boldsymbol\omega^T \end{bmatrix}^T \in \mathbb R^{12}}
\end{IEEEeqnarray}

The dynamic model of a quadcopter can be formulated by the form of $\dot{\boldsymbol x}=f(\boldsymbol x)+\tilde{\boldsymbol B} \hat{\boldsymbol u}$. Let $\hat{\boldsymbol u}=\boldsymbol u_{\mathrm{eq}}+ \boldsymbol u$ with $\hat{\boldsymbol u} \in \mathbb R^4$, where $\boldsymbol u_{\mathrm{eq}} = \begin{bmatrix} \frac{mg}{4} & \frac{mg}{4} & \frac{mg}{4} & \frac{mg}{4} \end{bmatrix}^T$ is used to overcome the gravity of the quadcopter. Therefore, the quadcopter system can be represented by~\eqref{eq:model}.
\begin{IEEEeqnarray*}{c}
	\label{eq:model}
	{\begin{bmatrix} {\dot{\boldsymbol p}} \\ {\dot{\boldsymbol v}} \\ {\dot{\boldsymbol\zeta}} \\ {\dot{\boldsymbol\omega}}\end{bmatrix}} =
	{\begin{bmatrix} {\boldsymbol R(\phi, \theta, \psi)}^T {\boldsymbol v} \\ {-{\boldsymbol\omega}\times \boldsymbol v + g {\boldsymbol R(\phi, \theta, \psi)} \boldsymbol e} \\ \boldsymbol W(\phi, \theta, \psi) {\boldsymbol\omega} \\ \boldsymbol J^{-1}(-{\boldsymbol\omega} \times {\boldsymbol J} {\boldsymbol\omega}) \end{bmatrix}
		+ {\tilde{\boldsymbol B}} {({\boldsymbol u}_{\mathrm{eq}} + {\boldsymbol u})}}  \yesnumber
\end{IEEEeqnarray*}
Due to constraints of space, more details of the matrix ${\tilde{\boldsymbol B}} \in \mathbb R^{12\times 4}$
refer to~\cite{zhang2019integrated}.

\subsection{Model Linearization}
The expression
\eqref{eq:model} is nonlinear and time-variant,
so the state-dependent coefficient factorization~\cite{SDE}
is used to handle and address the nonlinear dynamics.
The resulting state-space expression can
be expressed by~\eqref{eq:state-space}.

\begin{IEEEeqnarray}{rCl}
	\label{eq:state-space}
	{\boldsymbol x_{k+1|t}}=(\tilde{\boldsymbol A}_t\Delta t + I) \boldsymbol x_{t+k|t} + \tilde{\boldsymbol B} \Delta t (\boldsymbol u_\textup{eq}+ \boldsymbol u_{t+k|t}) \IEEEeqnarraynumspace
\end{IEEEeqnarray}
where $\Delta t$ is the sampling time interval.
Since ${\tilde{\boldsymbol A}}_t$ and $\tilde{\boldsymbol B}$
are dependent on the current state $\boldsymbol x$,
this state-space representation is a pseudo-linear form,
and then we can suitably consider the system matrices to be constant during the prediction horizon.
The full details of the matrix $\tilde{\boldsymbol A}_t \in \mathbb R^{12}$ are shown in Appendix~\ref{section:appendix_matrix}.
The control design focuses on $\boldsymbol u _{t+k|t}$. When $\boldsymbol u _{t+k|t} = 0$, the quadcopter lies in an equilibrium situation as $\tilde{\boldsymbol A}_t \boldsymbol x_{t+k|t} + \tilde{\boldsymbol B} \boldsymbol u_\textup{eq} = 0$.

\subsection{Problem Formulation}
According to the aforementioned analysis, a nonlinear MPC problem with chance constraints at time $t$ can be formulated as \eqref{eq:MPC problem}.

\begin{IEEEeqnarray*}{cl}
	\label{eq:MPC problem}
	{\operatorname*{min}_{\boldsymbol{u}}} & \quad \sum_{k=0}^{N} \left\| \boldsymbol x_{t+k|t} - {{\boldsymbol x}_\text{ref}}_{t+k} \right\|_{\boldsymbol Q} + \sum_{k=0}^{N-1} \left\| \boldsymbol u_{t+k|t} \right\|_{\boldsymbol R} \\
	{\operatorname{s.t.}} & {\quad {\boldsymbol x_{k+1|t}}=(\tilde{\boldsymbol A}_t\Delta t + I) \boldsymbol x_{t+k|t} + \tilde{\boldsymbol B} \Delta t (\boldsymbol u_\textup{eq}+ \boldsymbol u_{t+k|t})}  \\
	\quad  & \quad \left\|\boldsymbol p_{t+k+1|t} - {\boldsymbol p_{\text{obs},j}}\right\| \geq d_{\text{safe}}\\
	\quad  & \quad -\pi \leq \boldsymbol \phi, \boldsymbol \psi \leq \pi, \  -\frac{\pi}{2} \leq \boldsymbol{\theta} \leq \frac{\pi}{2}\\
	\quad  & \quad \underline{\boldsymbol v} \leq \boldsymbol v_{t+k+1|t} \leq \overline{\boldsymbol v} \\
	\quad  & \quad  \underline{\boldsymbol u} \leq \boldsymbol u_{t+k|t} \leq  \overline{\boldsymbol u}\\
	\quad  & \quad {\boldsymbol{\kappa}_{i,k+1}^T \left(\boldsymbol{p}_{t+k+1|t}-\Pi_{\mathcal I_{i,k+1}}\left(\boldsymbol{p}_{t+k+1|t}\right)\right)}  \\
	\quad  &  \quad\quad \geq  {\sqrt{2\boldsymbol{\kappa}_{i,k+1}^T \boldsymbol\Sigma_{i,k+1} \boldsymbol{\kappa}_{i,k+1}}\operatorname{erf}^{-1}(1-2\varphi)} \\
	\quad  & \quad \left\| \boldsymbol{x}_{t+N-1|t}-{{\boldsymbol x}_\text{ref}}_{t+N-1} \right\|^2_{\boldsymbol{Q}} \\
	\quad  & \quad\quad + \left\| \boldsymbol{u}_{t+N-1|t}\right\|^2_{\boldsymbol{R}} \leq \varpi \\
	\quad  & \quad  \forall j \in \mathbb N_{n_s} \cup \mathbb N_{n_o}, \quad \forall i \in \mathbb N_{n_o}   \yesnumber
\end{IEEEeqnarray*}
where ${{\boldsymbol x}_{\mathrm{ref}}}_{t+k}$ denotes the reference states at the step $t+k$ which can be planned by some path planning algorithms~\cite{HLT*}, such as A*, D* Lite, HLT*, RRT, etc., ${\boldsymbol p_{\text{obs},j}}$ is the position vector of static obstacles, the set $\mathbb N_{n_s}=\{1,2,\cdots, n_s\}$ and $n_s$ is the number of $i$th static obstacle, $\underline{\boldsymbol v}, \underline{\boldsymbol u}$ and $\overline{\boldsymbol v}, \overline{\boldsymbol u}$ are the lower and upper bounds of velocity and control input limitations for this UAV, $\varpi$ is given by
\begin{IEEEeqnarray*}{rCl}
\varpi &=& \left\|\boldsymbol{x}_{t-1|t-1}-{{\boldsymbol x}_\text{ref}}_{t-1}\right\|^2_{\boldsymbol{Q}} + \left\|\boldsymbol{u}_{t-1|t}\right\|^2_{\boldsymbol{R}} \\
 && - \sum\limits_{i=0}^{N-2} \left\|\boldsymbol{x}_{t+i|t}-{\boldsymbol x}_{t+i|t-1}\right\|^2_{\boldsymbol{Q}} - e \yesnumber
\end{IEEEeqnarray*}
with $e>0$~\cite{borrelli2005mpc}. Remarkably, the last constraint of this problem~\eqref{eq:MPC problem} represents the stability condition~\eqref{eq:stability_condition}.

\subsection{Simulation Results}
\label{section:simulation}
Parameters of this quadcopter are shown in Table~\ref{tab:setting params}. All of the simulations are implemented in Python 3.7 environment on a PC with Intel i5 CPU@3.30 GHz.
\subsubsection{ Inference Results of Trajectory Prediction}
The input feature of vBGMM is the Chebyshev coefficients of the trajectories in the past and future. The dataset includes 1000 planned trajectories with 3D positions $(\boldsymbol p_x,\boldsymbol p_y,\boldsymbol p_z)$ in the 3D clustered environment with multiple static obstacles. The Chebyshev approximation is used in all these trajectories to obtain the input data in the feature space, i.e., coefficients of this approximation $\boldsymbol a_x,\boldsymbol a_y,\boldsymbol a_z$. We split these trajectories with history partition with a length of 70\% and future partition with a length of 40\%. There is a length of 10\% overlapping segment between history and future trajectories snippets. The total data set is divided into a training set and a test set with 875 and 125 trajectories. Experimental results show that the Chebyshev approximation performance is better when the degree of freedom of this approximation is set to 4, which results in the feature space with a dimension of $D=15$. The initial number of mixing components $K$ is set to 30. The parameters of the prior distribution for the vBGMM are $\beta_0=1$, $\alpha_0=1$, $\nu_0=5$, and the initial mean vector $\mathbf m_0$ and precision matrix $\mathbf W_0$ are set to the mean of training data $\boldsymbol \mu(X)$ and covariance of training data $\boldsymbol \Sigma(X)$. Here, $X$ means the training data. The allowable probability threshold of collision $\varphi$ is set to 0.05.

The training and testing results are shown in Fig.~\ref{fig:vgmm_res} with the root-mean-square (RMS) deviation of 1.681.
\begin{figure}[!t]
	\centering
	\includegraphics[width=0.5\textwidth,trim=18 0 0 0,clip]{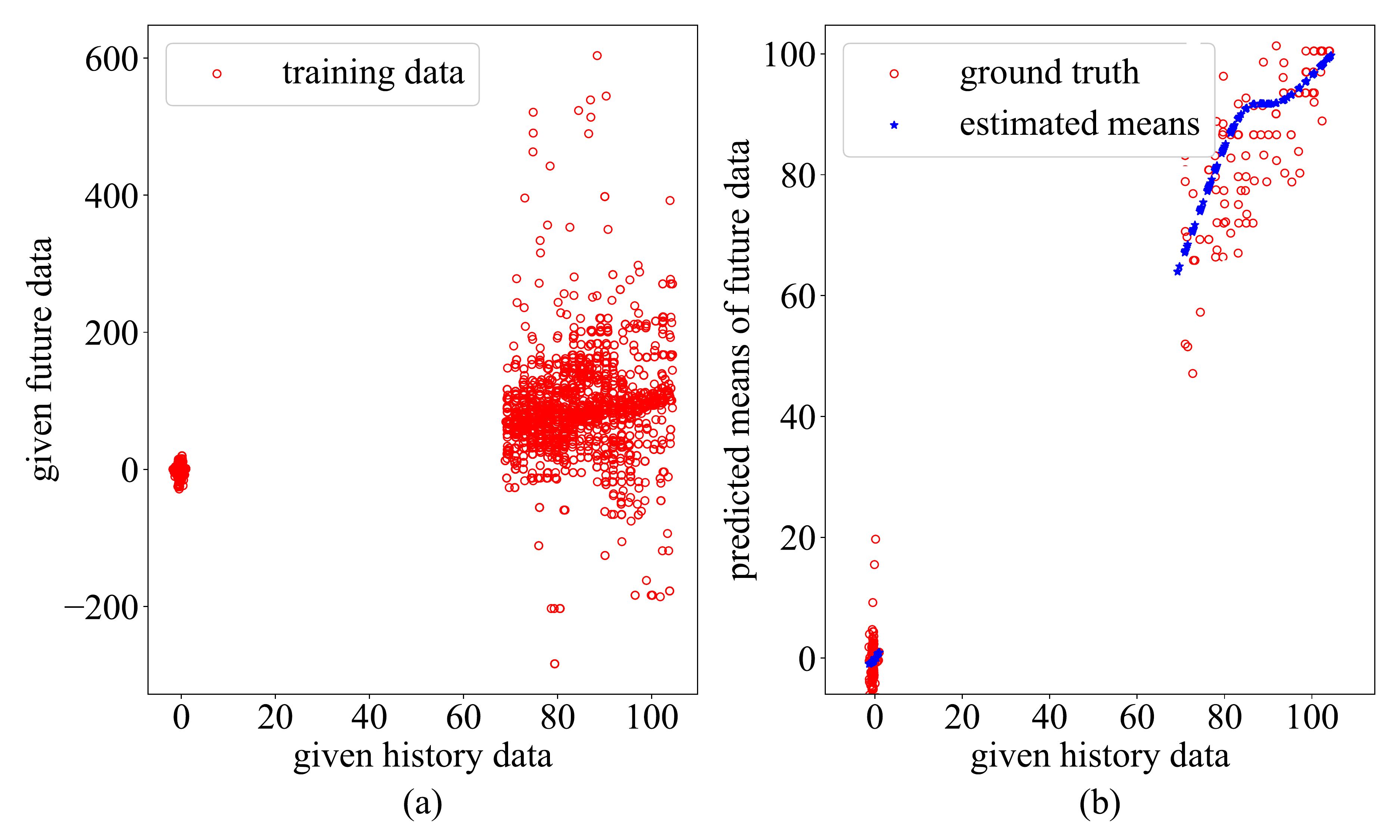}
	\caption{\label{fig:vgmm_res} Training and testing results of trajectory prediction ((a): results of training process; (b): results of testing process).}
\end{figure}
The small RMS error indicates that the vBGMM can effectively learn and infer the predictive posterior distribution.

The variational lower bound can be used to monitor the convergence and check the correctness of variational inference process. The maximization of the variational lower bound indicates a good estimation of the posterior distribution. At each step of the iterative re-estimation process, the lower bound will not decrease. Our prediction result of the variational lower bound is shown in Fig.~\ref{fig:vgmm_res_elbo}. The stopping criterion in terms of the variational lower bound difference is set to $10^{-12}$.
\begin{figure}[!t]
	\centering
	\includegraphics[width=0.47\textwidth, trim=20 0 0 20,clip]{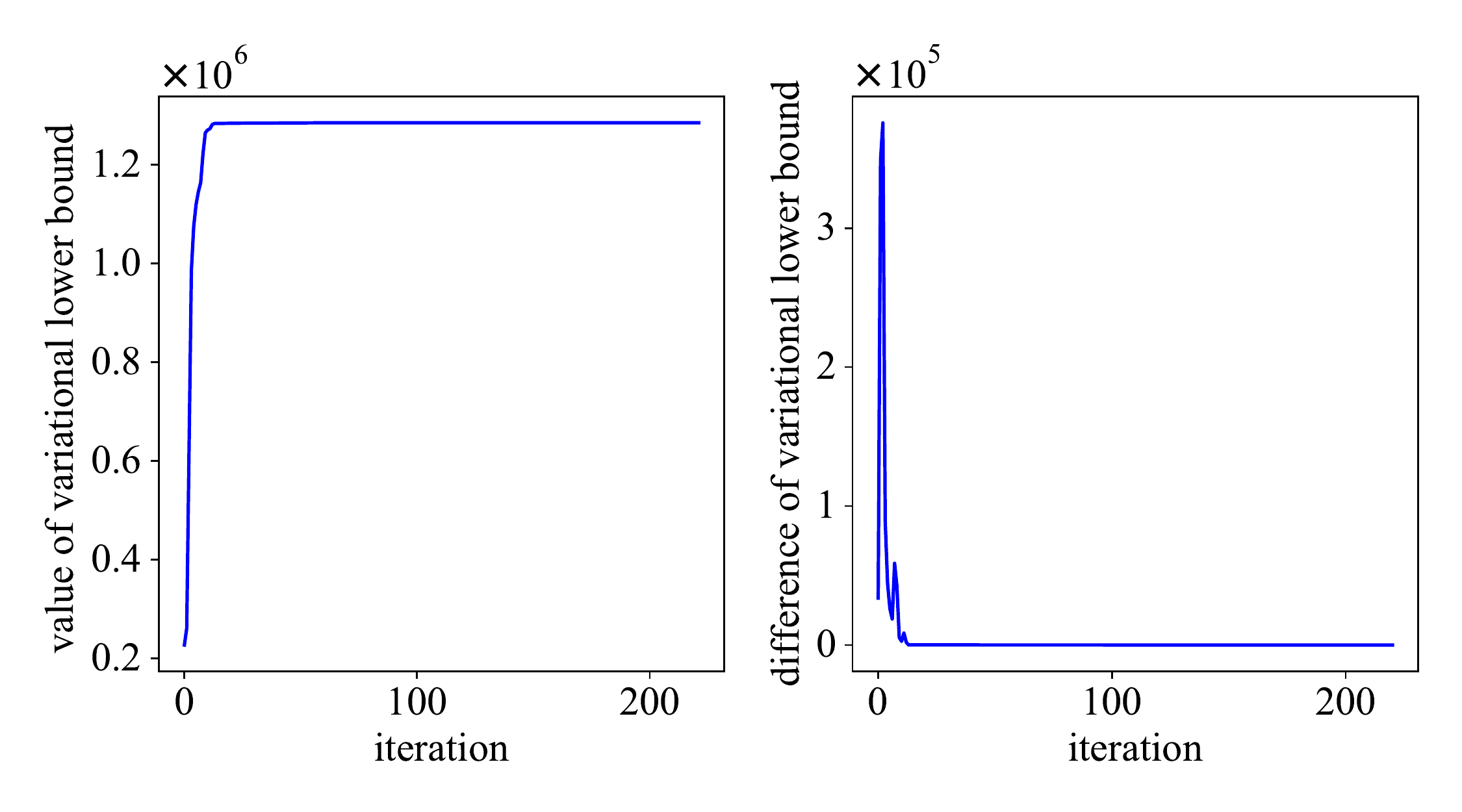}
	\caption{\label{fig:vgmm_res_elbo} The change of variational lower bound and difference of variational lower bound during iterations.}
\end{figure}

The number of mixture components $K$ can be automatically decreasing due to the sparsity property of variational approximation. The sparsity performance can be illustrated by the value of weighting factor $\tilde{\alpha}_k$, as shown in Fig.~\ref{fig:vgmm_res_sparse}. According to Fig.~\ref{fig:vgmm_res_sparse}, the number dominant components decreases from the initial value of 30 to 7.
\begin{figure}[!t]
	\centering
	\includegraphics[width=0.38\textwidth, trim=20 0 0 0,clip]{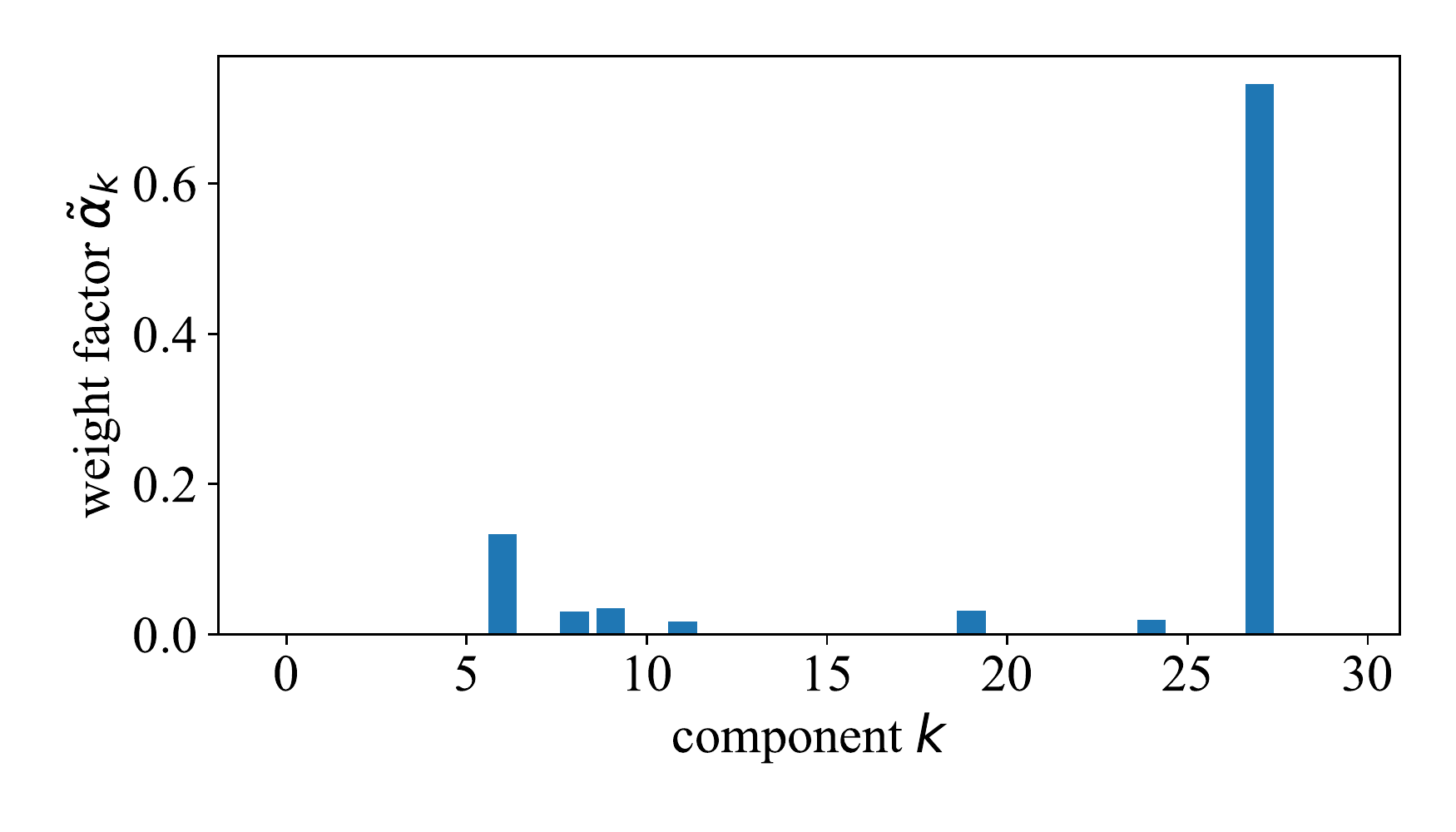}
	\caption{\label{fig:vgmm_res_sparse} Weighting factors in each component after training.}
\end{figure}

Here, we just take one predicted trajectory as an example to show the predictive performance. Fig.~\ref{fig:vgmm_res_one_test} and Fig.~\ref{fig:vgmm_res_one_test_cov} show the prediction trajectory which consists of means of the predicted position with uncertainties.
\begin{figure}[!t]
	\centering
	\includegraphics[width=0.32\textwidth]{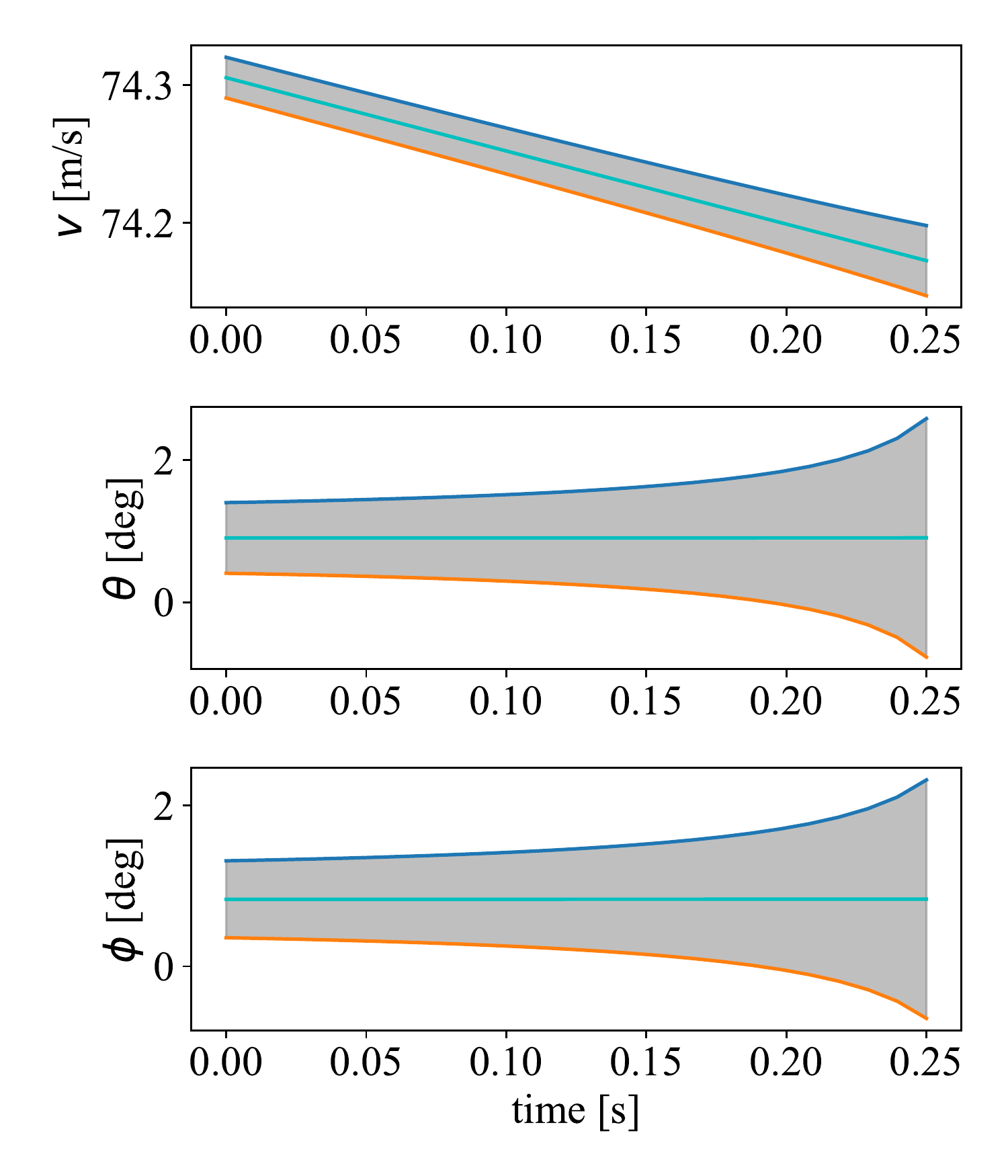}
	\caption{\label{fig:vgmm_res_one_test} Mean and uncertainties of one predictive states in each of spherical coordinates over prediction horizon (cyan line: means in each spherical coordinate with time; gray region: uncertainties in each spherical coordinate with time; orange line: lower bound of the uncertainties in each spherical coordinate with time; blue line: upper bound of the uncertainties in each spherical coordinate with time).}
\end{figure}
\begin{figure}[!t]
	\centering
	\includegraphics[width=0.5\textwidth, trim=20 0 30 20,clip]{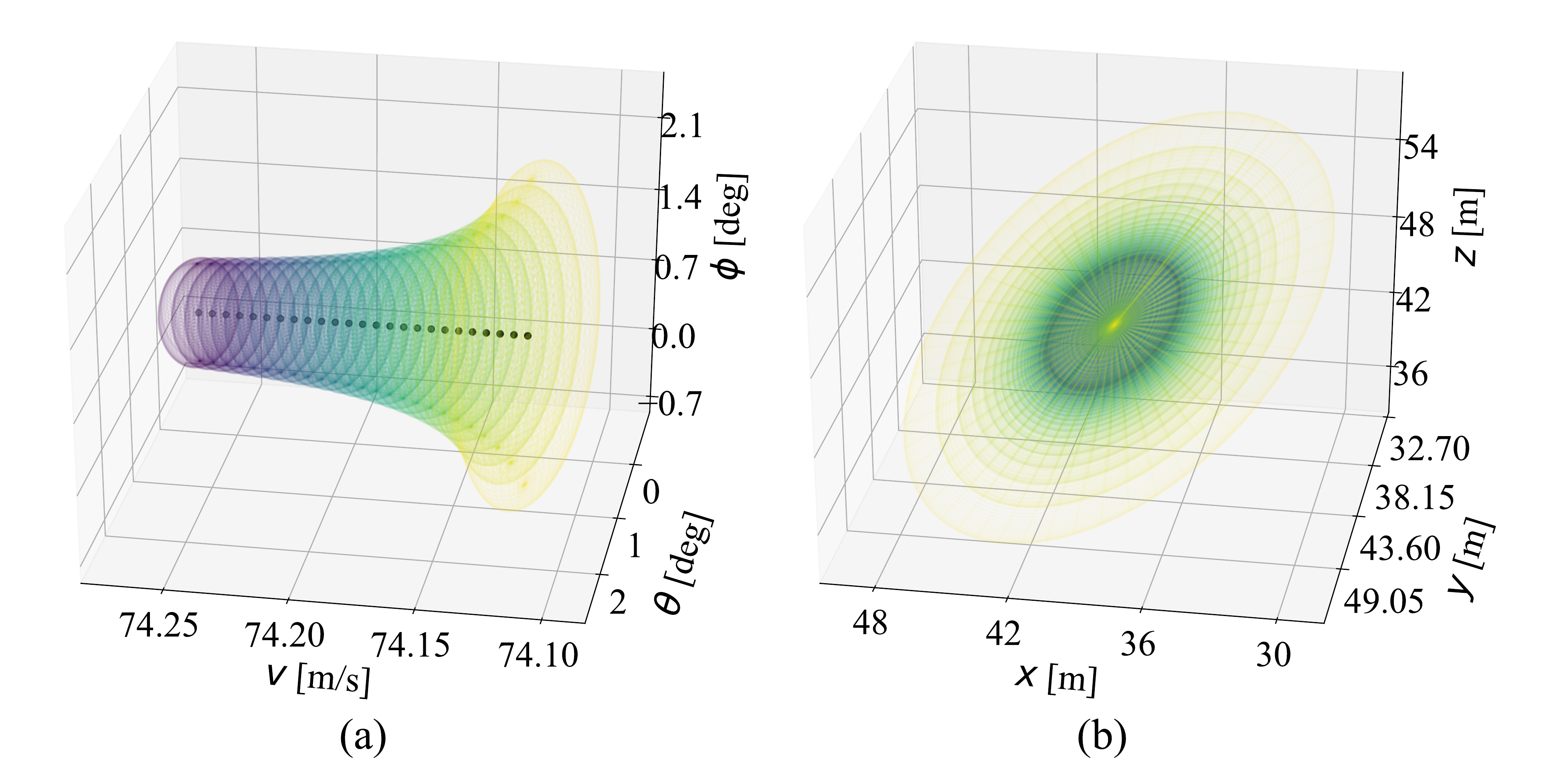}
	\caption{\label{fig:vgmm_res_one_test_cov} Means and uncertainties of one predictive trajectory in spherical and Cartesian coordinates over a prediction horizon ((a): means and uncertainties in the spherical coordinates; (b): means and uncertainties in the Cartesian coordinates; black dots: means with time; colorful ellipsoids: uncertainties with time).}
\end{figure}

\subsubsection{Results of the Nonlinear MPC Solution}
\label{subsection:simulation_result_optimization}
In this simulation, there are ten static random obstacles and three moving obstacles, as shown in Fig.~\ref{fig:traj_withPred}. According to this figure, such a control method can track the reference trajectory with small tracking errors except for collision avoidance. When tracking the reference trajectory, avoiding the obstacles based on prediction should also be satisfied for the UAV. In this simulation, the computation time limit $t_{\text{max}}$ is set to $0.2$ s. The average of solving time $t_{\text{comp}}$ in each sampling time $t$ during the whole running process is about $0.093$ s.
\begin{figure}[!t]
	\centering
	\includegraphics[width=0.45\textwidth,trim=100 60 80 55,clip]{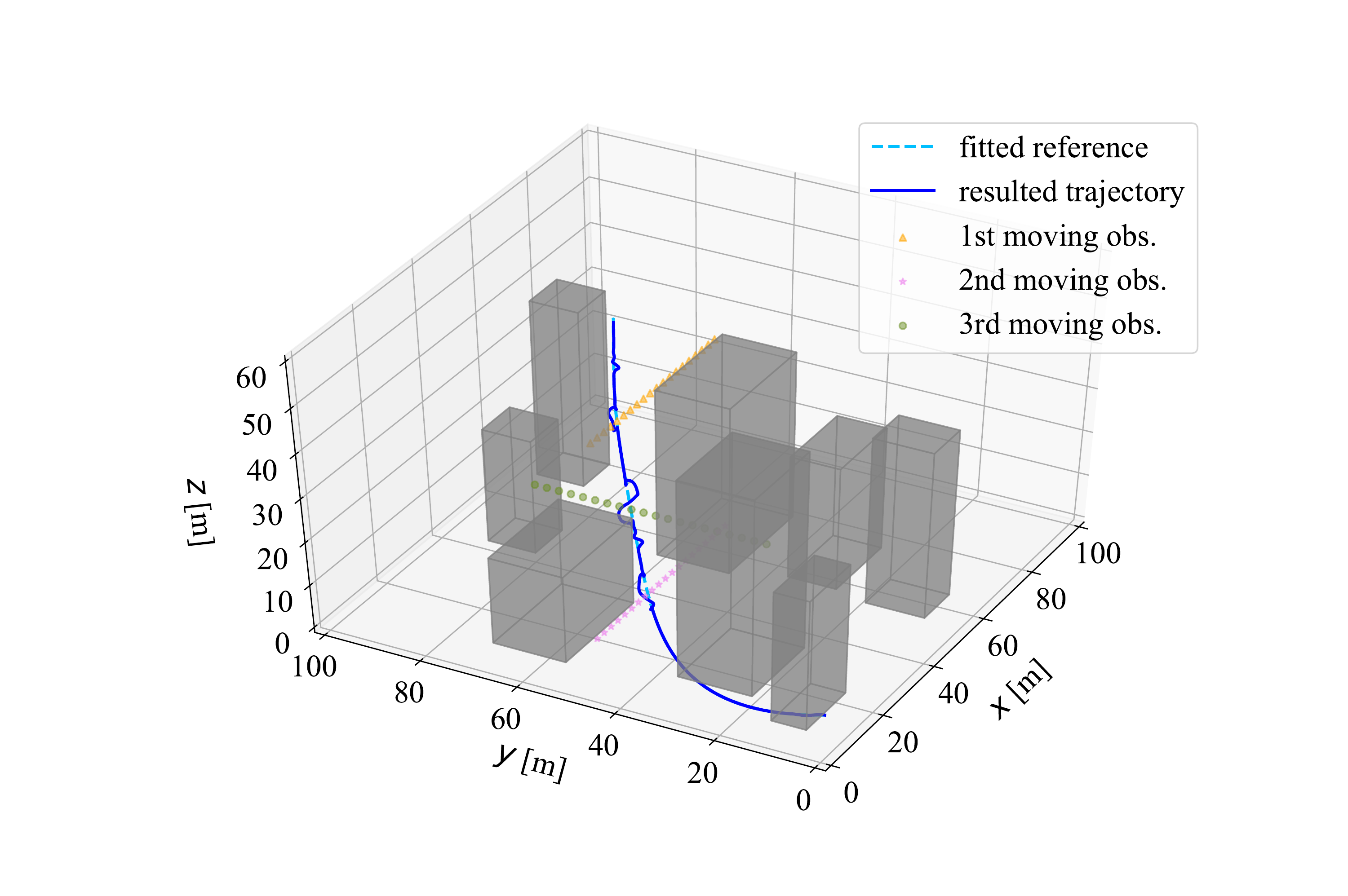}
	\caption{\label{fig:traj_withPred} Trajectories generated in 3D environment with probabilistic prediction.}
\end{figure}

In Fig.~\ref{fig:positions_withPred_comp}, the perturbations of positions  represent the larger tracking errors in three dimensions $x,y,z$, which illustrates the collision avoidance behaviours of the UAV to the static obstacles and moving obstacles.
Compared with the results of nonlinear MPC controller without the probabilistic prediction (magenta line), the tracking errors due to avoidance of moving obstacles with probabilistic prediction are smaller and the time when the UAV test platform starts to avoid the moving obstacles is certainly earlier.
This indicates that the prediction process can effectively help
to avoid the collision more accurately
and prepare to avoid potential collision measurably in advance.

\begin{figure}[!t]
	\centering
	\includegraphics[width=0.5\textwidth,trim=70 18 70 80,clip]{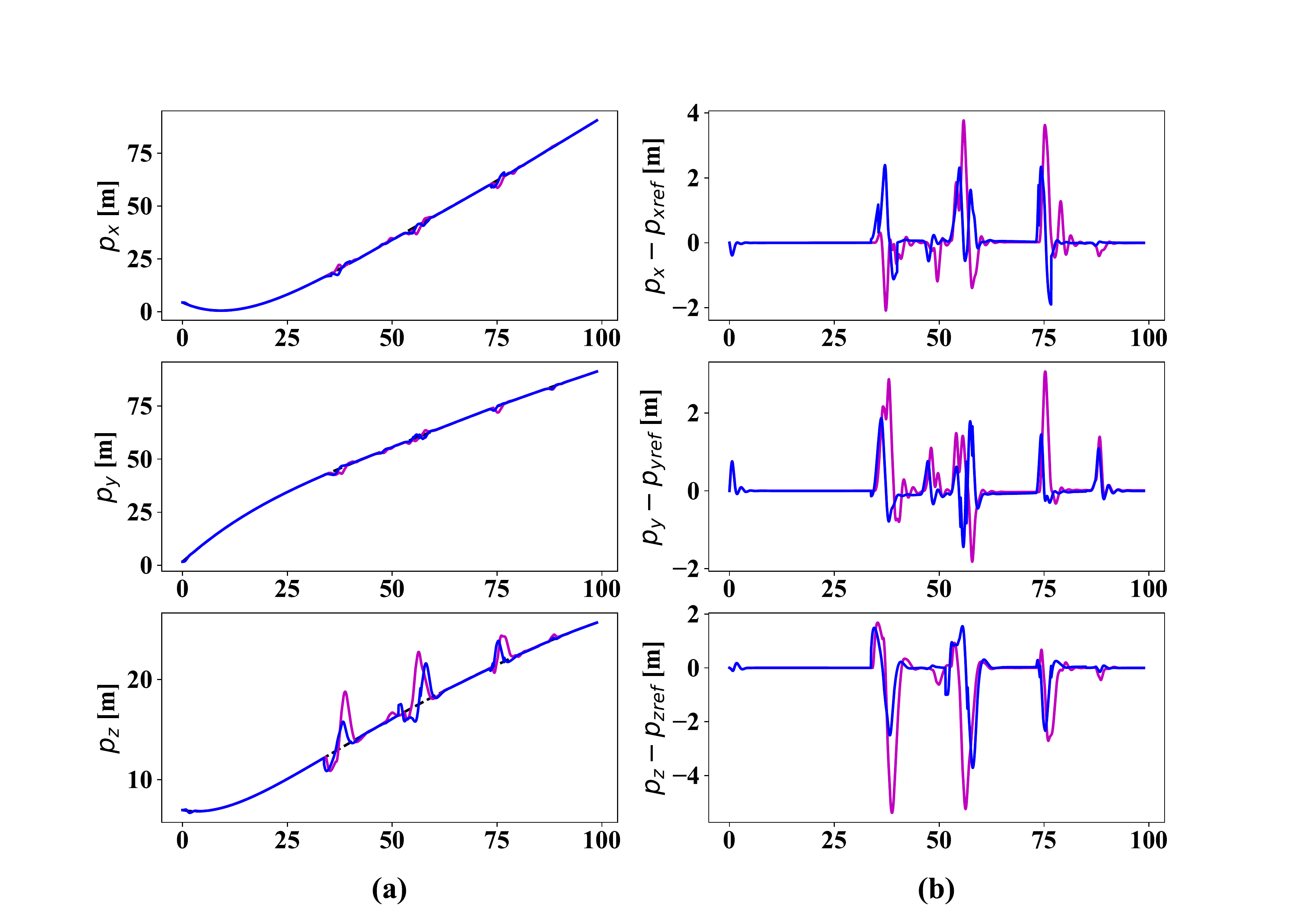}
	\caption{\label{fig:positions_withPred_comp} Comparison of Positions and tracking errors in three dimensions with probabilistic prediction and without prediction  ((a): comparison of positions; (b): comparison of tracking errors; blue line: with prediction; magenta line: without prediction); dark dash line: reference trajectory.}
\end{figure}

During generating the trajectories, the constraints of control inputs should be satisfied, as shown in Fig.~\ref{fig:control_inputs_withPred}. In this figure, all control inputs are constrained in the predefined bounded range $[-2,2]$ N.
\begin{figure}[!t]
	\centering
	\includegraphics[width=0.5\textwidth,trim=30 30 30 20,clip]{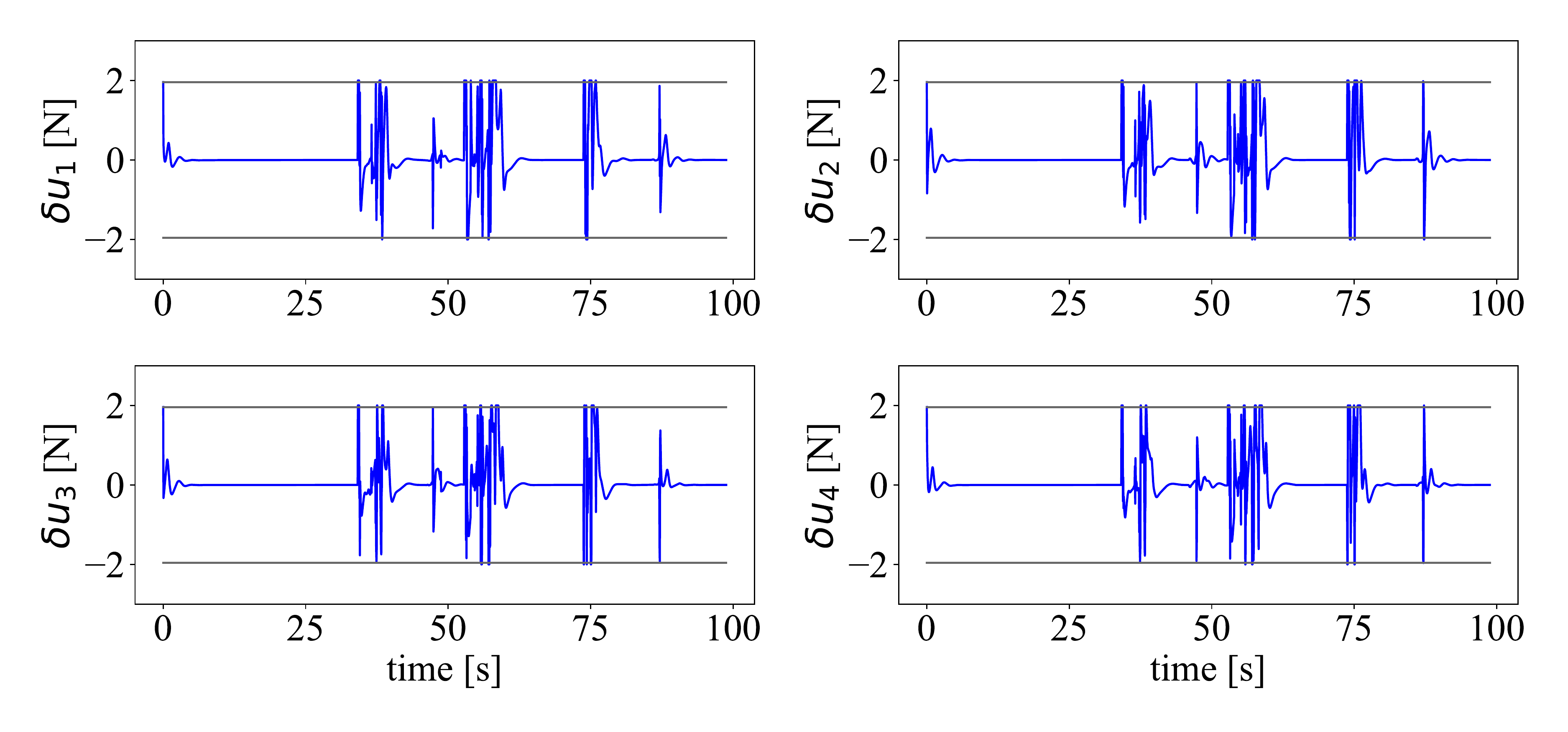}
	\caption{\label{fig:control_inputs_withPred} Control inputs with probabilistic prediction.}
\end{figure}
Besides, constraints of velocity also need to be satisfied. In Fig.~\ref{fig:velo_withPred}, it is obvious that the velocity in each dimension $x,y,z$ are successfully confined into the given range $[-5,5]$ m/s and the angular velocities in three dimensions are illustrated in the right column of this figure.
\begin{figure}[h]
	\centering
	\includegraphics[width=0.5\textwidth,trim=30 30 30 30,clip]{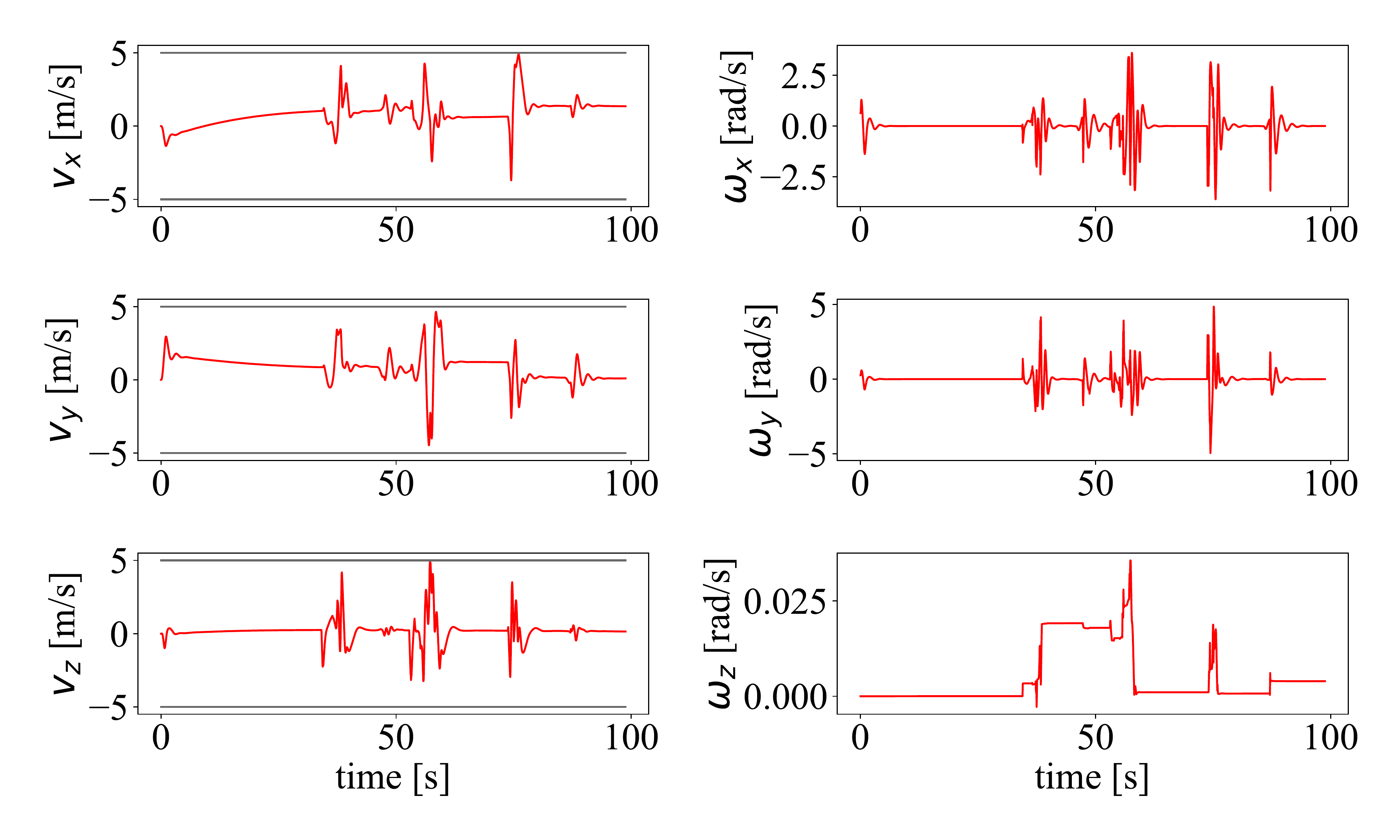}
	\caption{\label{fig:velo_withPred} Velocities and angular velocities in three dimensions with probabilistic prediction.}
\end{figure}
As aforementioned, the shortest distance between the current position of the host UAV and the nearest static obstacle should be no less than the predefined $d_\textup{safe}=2$ m, as show in Fig.~\ref{fig:dist_obs_withPred}.
\begin{figure}[!t]
	\centering
	\includegraphics[width=0.3\textwidth,trim=0 20 0 15,clip]{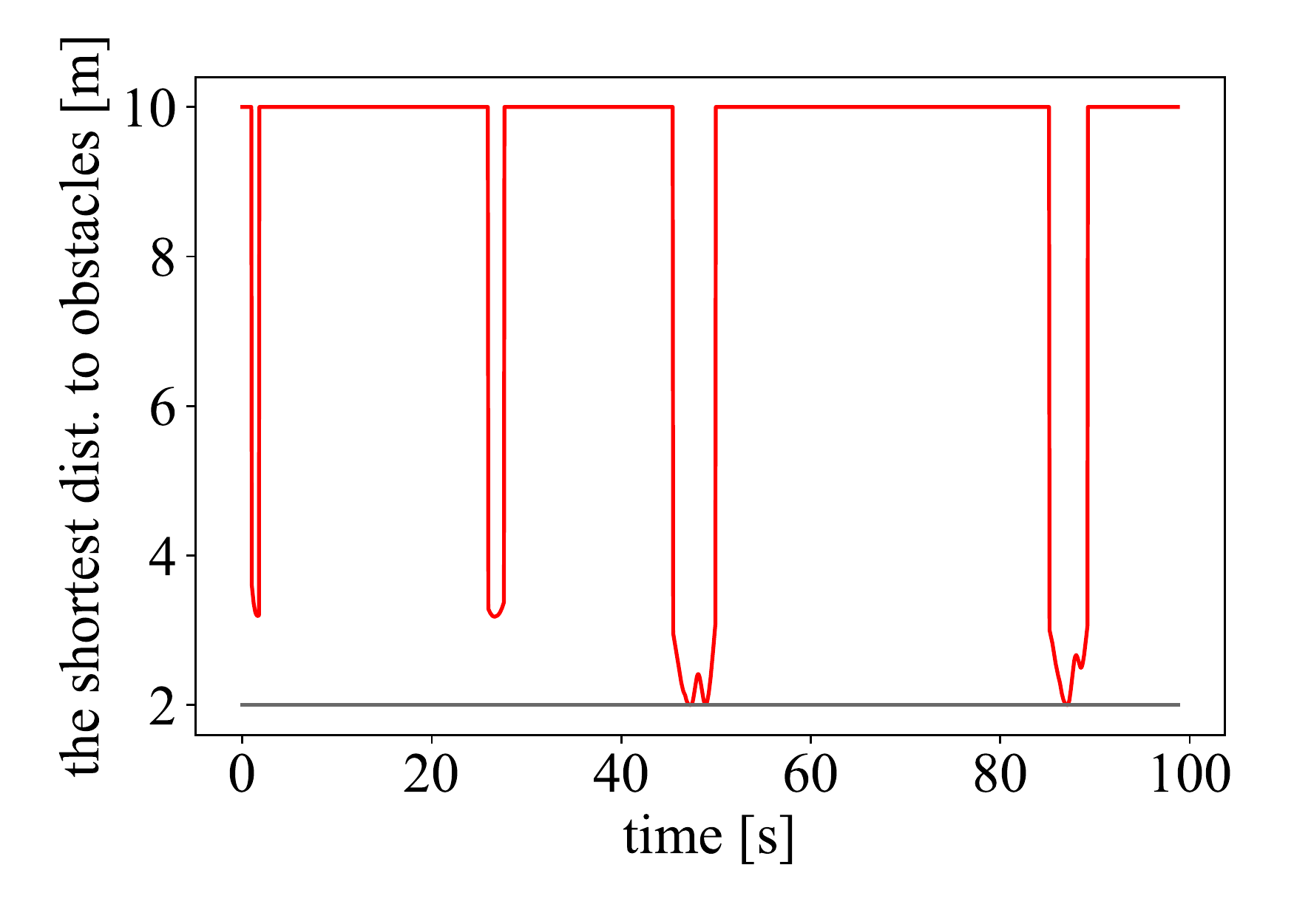}
	\caption{\label{fig:dist_obs_withPred} The shortest distance to the nearest static obstacle with probabilistic prediction.}
\end{figure}

Fig.~\ref{fig:dist_mov_obs_withPred} and Fig.~\ref{fig:dist_mov_obs_noPred} show the distance from the robot to the three moving obstacles when doing probabilistic prediction by vBGMM and not doing prediction, respectively. The pink, green and yellow dotted line represent the distance to the first, second and third moving obstacles. Note that if the distance to the moving obstacles is greater than 10 m, the distance will cap at 10 m.
Compared with the results of nonlinear MPC controller without prediction in Fig.~\ref{fig:dist_mov_obs_noPred},
it can be observed that the risk of future collision will be higher (without prediction).
If there are some fast-moving obstacles,
in the methodology without prediction,
the probability of collision will be much higher;
while the nonlinear MPC method with prediction
can foresee the future probabilistic trajectory
and avoid the obstacles as soon as it is detected by the UAV,
instead of encountering
violation of the shortest safety distance condition (without prediction).
\begin{figure}[!t]
	\centering
	\includegraphics[width=0.37\textwidth,trim=0 40 0 20,clip]{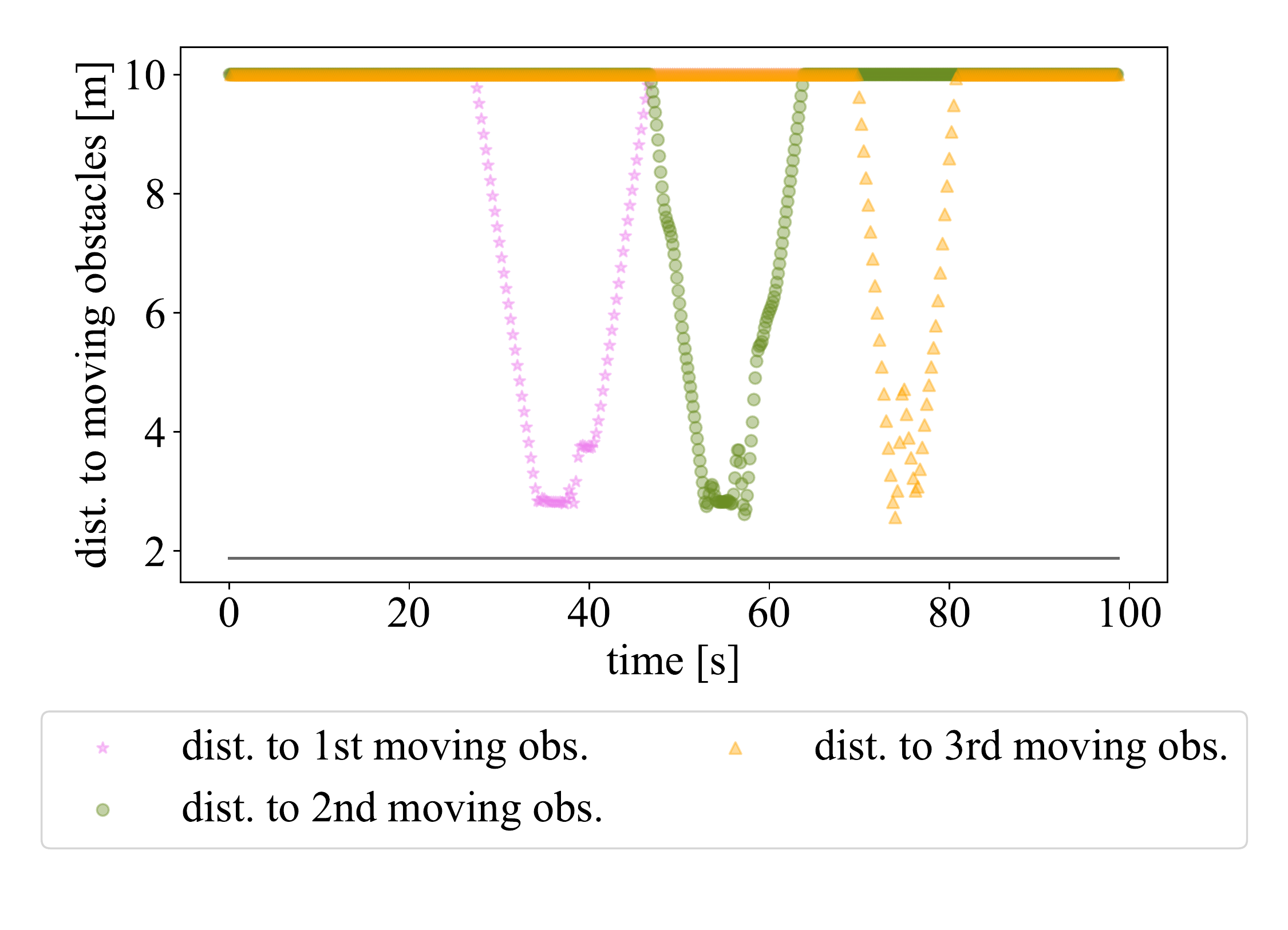}
	\caption{\label{fig:dist_mov_obs_withPred} Distance to the three moving obstacles with probabilistic prediction.}
\end{figure}
\begin{figure}[!t]
	\centering
	\includegraphics[width=0.37\textwidth,trim=0 40 0 20,clip]{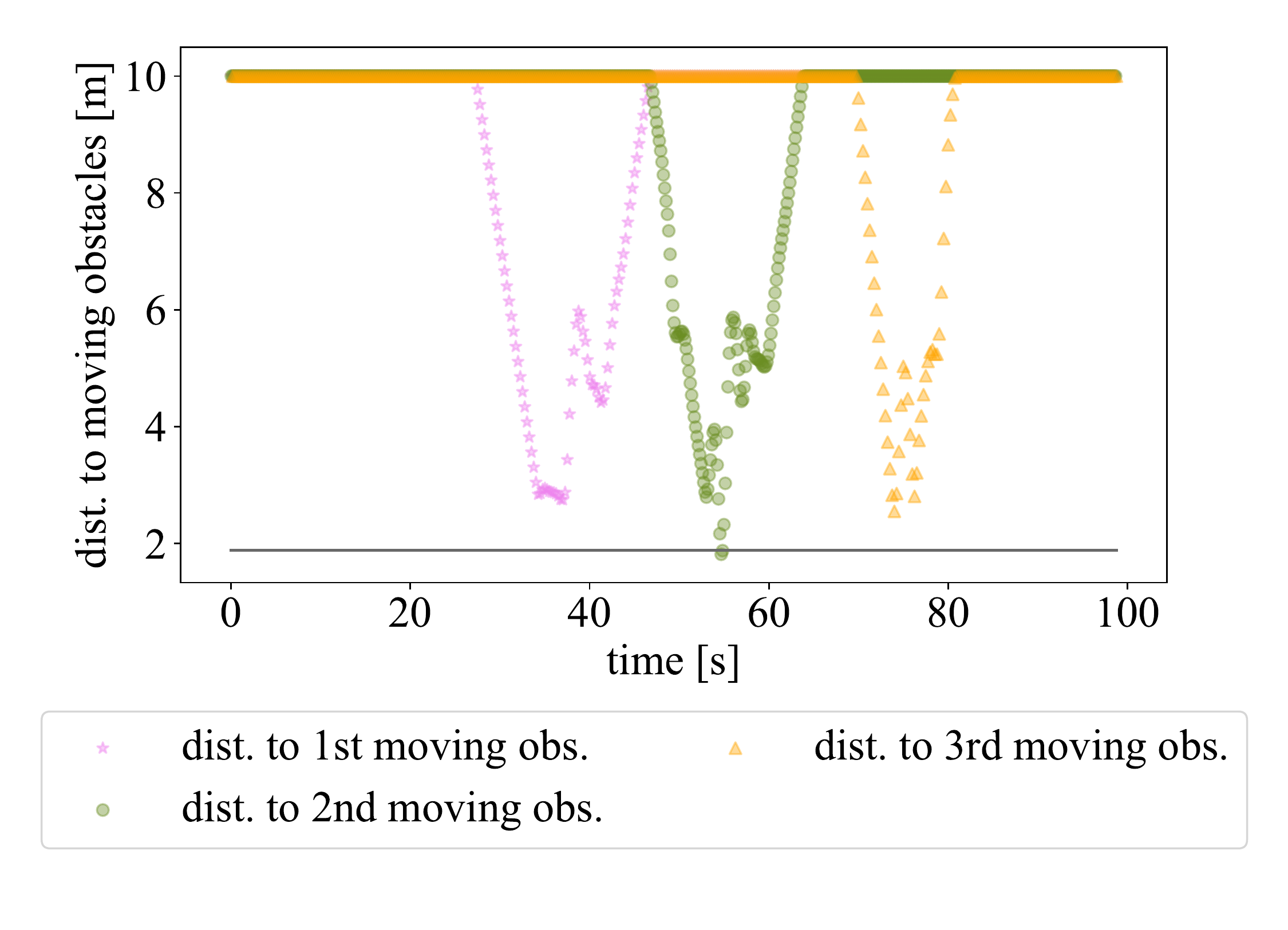}
	\caption{\label{fig:dist_mov_obs_noPred} Distance to the three moving obstacles without probabilistic prediction.}
\end{figure}

\section{Conclusion}
\label{section:conclusion}
In this paper, a suitably interesting concept of a chance-constrained nonlinear MPC approach
with probabilistic prediction
is proposed to generate trajectories for appropriate agents in a cluttered and unknown environment,
and also
in the presence of the parametric uncertainty and sensor noise.
Variational inference is used to infer the parameters
of the prediction distribution of future trajectory;
and
chance constraints are formulated
based on the prediction results and reformulated as deterministic linear constraints
by enlarging the ellipsoid collision region into half space.
Then, this
nonlinear MPC problem embedded with collision avoidance linear chance constraints
is designed and solved iteratively by an optimization approach.
Simulation results on a rather appropriate test-case of a quadcopter system
show that our formulation of this nonlinear MPC method (integrated with chance constraints
based on the probabilistic prediction)
can very effectively reduce the risk of potential future collision
and avoid the obstacles accurately
while meeting all of the other typical environmental and physical constraints.

\appendices
\section{Matrix in Quadcopter Model}
\label{section:appendix_matrix}
Matrix $\tilde{\boldsymbol A}_t$ in \eqref{eq:state-space} is given by
\begin{IEEEeqnarray*}{rcl}
	\label{Ac}
	&\tilde{\boldsymbol A}_t =
	\begin{bmatrix}
		\mathbf 0_{(3,3)} & \boldsymbol R_{(3,3)}^T    & \mathbf 0_{(3,3)} & \mathbf 0_{(3,3)} 	\\
		\mathbf 0_{(3,3)} & \boldsymbol A_{(3,3)} & \boldsymbol B_{(3,3)} & \boldsymbol C_{(3,3)} 	\\
		\mathbf 0_{(3,3)} & \mathbf 0_{(3,3)}   & \mathbf 0_{(3,3)} & \boldsymbol W_{(3,3)} 	\\
		\mathbf 0_{(3,3)} & \mathbf 0_{(3,3)}  & \mathbf 0_{(3,3)}  & \boldsymbol D_{(3,3)}
	\end{bmatrix}
\end{IEEEeqnarray*}
where $\boldsymbol R_{(3,3)} = \boldsymbol R(\phi,\theta,\psi)$, $\boldsymbol W_{(3,3)}=\boldsymbol W(\phi,\theta,\psi)$, $\mathbf 0_{(3,3)}$ is a 3-by-3 zero matrix. The block matrices $\boldsymbol A_{(3,3)}$, $\boldsymbol B_{(3,3)}$, $\boldsymbol C_{(3,3)}$ and $\boldsymbol D_{(3,3)}$ in $\tilde{\boldsymbol A}_t$ are shown as below.
\begin{IEEEeqnarray*}{rcl}
	\boldsymbol A_{(3,3)} &=&
	\begin{bmatrix}
		0 					&  \frac{\omega_z}{2} & -\frac{\omega_y}{2} \\
		-\frac{\omega_z}{2} & 0					  &  \frac{\omega_x}{2} \\
		\frac{\omega_y}{2}  & -\frac{\omega_x}{2} & 0
	\end{bmatrix}
\end{IEEEeqnarray*}
\begin{IEEEeqnarray*}{rCl}
	\boldsymbol B_{(3,3)} &=&
	\begin{bmatrix}
		0							& -g \frac{s_\theta}{\theta}         & 0                        \\
		g \frac{c_\theta s_\phi}{\phi}    & 0								& 0							\\
		g \frac{{(c_\theta+1)}{(c_\phi-1)}}{2\phi} & g\frac{{(c_\phi+1)}{(c_\theta-1)}}{2\theta}  & 0
	\end{bmatrix}
\end{IEEEeqnarray*}
\begin{IEEEeqnarray*}{rCl}
	\boldsymbol C_{(3,3)} &=&
	\begin{bmatrix}
		0 			   & -\frac{v_z}{2} &  \frac{v_y}{2} \\
		\frac{v_z}{2} & 0			    &  \frac{v_x}{2} \\
		-\frac{v_y}{2} &  \frac{v_x}{2} & 0
	\end{bmatrix}
\end{IEEEeqnarray*}
\begin{IEEEeqnarray*}{rCl}
	\boldsymbol D_{(3,3)} &=&
	\begin{bmatrix}
		0     					 & \frac{(J_y-J_z)\psi}{2J_x} & \frac{(J_y-J_z)\theta}{2J_x} \\
		\frac{(J_z-J_x)\psi}{2J_y}   & 0     				  & \frac{(J_z-J_x)\phi}{2J_y}   \\
		\frac{(J_x-J_y)\theta}{2J_z} & \frac{(J_x-J_y)\phi}{2J_z} & 0
	\end{bmatrix}
\end{IEEEeqnarray*}

\section{Parameter Settings of the Quadcopter Model}
\label{appendix:nmpc}
Table~\ref{tab:setting params} shows the values of parameters used in the quadcopter dynamic model in Section~\ref{section:case_model}.
\begin{table}[!h]
	\renewcommand{\arraystretch}{1.3}
	\caption{Parameter settings for the UAV dynamic model}
	\label{tab:setting params}
	\centering
	\begin{tabular}{ |p{.19\textwidth}<{\centering}| |p{.05\textwidth}<{\centering}| |p{.05\textwidth}<{\centering}| |p{.05\textwidth}<{\centering}| } \hline
		Definition 	& Notation &   Value  &   Unit \\ \hline
		Mass		&	$m$   &   0.8    &   kg \\\hline
		Gravity acceleration		&	$g$   &   9.81    &  m/s$^2$ \\\hline
		Moment of inertia in $x$ dim. &	$J_x$ &   0.0244    &   kg$\cdot$ m$^2$ \\ \hline
		Moment of inertia in $y$ dim. 		&	$J_y$ &   0.0244    &  kg$\cdot$ m$^2$ \\ \hline
		Moment of inertia in $z$ dim.		&	$J_z$ &   0.0436    & kg$\cdot$m$^2$ \\ \hline
		Distance from center to rotor		&	$L$   &   0.162    &   m \\ \hline
		Ratio of rotor angular momentum to lift	&	$\gamma$   &  $2.17\times10^{-3}$    &  \makecell[c]{m} \\ \hline
		Sampling time interval	&	$\Delta t$ & 0.05 & s   \\ \hline
		State weighting matrix		&	$\boldsymbol Q$ & $I_{12}$ & - \\ \hline
		Input weighting matrix		&	$\boldsymbol R$ & $I_{4}$ & - \\\hline
		Prediction horizon	&	$N$ & 25        &  -    \\\hline
		Control input difference upper/lower bound &  $\delta \boldsymbol u_{\max}$, $\delta \boldsymbol u_{\min}$ & 1.96, \ \ \ \ \ \  -1.96 & \makecell*[c]{N}  \\ \hline
		Velocity upper/lower bound & $\boldsymbol v_{\max}$, $\boldsymbol v_{\min}$ & 5,\ \ \ \ \ \ \ \ \ \ \   -5 & \makecell*[c]{m/s} \\ \hline
		The safety distance to obstacles & $d_\textup{safe}$ & 2 & m \\ \hline
		Detection radius & $r_\textup{det}$  & 10 & m \\ \hline
	\end{tabular}
\end{table}

\bibliographystyle{IEEEtran}
\bibliography{IEEEabrv,reference}

\begin{thebibliography}{10}
\providecommand{\url}[1]{#1}
\csname url@samestyle\endcsname
\providecommand{\newblock}{\relax}
\providecommand{\bibinfo}[2]{#2}
\providecommand{\BIBentrySTDinterwordspacing}{\spaceskip=0pt\relax}
\providecommand{\BIBentryALTinterwordstretchfactor}{4}
\providecommand{\BIBentryALTinterwordspacing}{\spaceskip=\fontdimen2\font plus
\BIBentryALTinterwordstretchfactor\fontdimen3\font minus
  \fontdimen4\font\relax}
\providecommand{\BIBforeignlanguage}[2]{{%
\expandafter\ifx\csname l@#1\endcsname\relax
\typeout{** WARNING: IEEEtran.bst: No hyphenation pattern has been}%
\typeout{** loaded for the language `#1'. Using the pattern for}%
\typeout{** the default language instead.}%
\else
\language=\csname l@#1\endcsname
\fi
#2}}
\providecommand{\BIBdecl}{\relax}
\BIBdecl

\bibitem{mora2013teleoperation}
A.~Mora, D.~F. Glas, T.~Kanda, and N.~Hagita, ``A teleoperation approach for
  mobile social robots incorporating automatic gaze control and
  three-dimensional spatial visualization,'' \emph{IEEE Transactions on
  Systems, Man, and Cybernetics: Systems}, vol.~43, no.~3, pp. 630--642, 2013.

\bibitem{huang2019collision}
S.~Huang, R.~S.~H. Teo, and K.~K. Tan, ``Collision avoidance of multi unmanned
  aerial vehicles: A review,'' \emph{Annual Reviews in Control}, vol.~48.

\bibitem{he2018iterative}
W.~He, T.~Meng, X.~He, and C.~Sun, ``Iterative learning control for a flapping
  wing micro aerial vehicle under distributed disturbances,'' \emph{IEEE
  Transactions on Cybernetics}, vol.~49, no.~4, pp. 1524--1535, 2018.

\bibitem{kong2019adaptive}
L.~Kong, W.~He, C.~Yang, Z.~Li, and C.~Sun, ``Adaptive fuzzy control for
  coordinated multiple robots with constraint using impedance learning,''
  \emph{IEEE Transactions on Cybernetics}, vol.~49, no.~8, pp. 3052--3063,
  2019.

\bibitem{he2020admittance}
W.~He, C.~Xue, X.~Yu, Z.~Li, and C.~Yang, ``Admittance-based controller design
  for physical human-robot interaction in the constrained task space,''
  \emph{IEEE Transactions on Automation Science and Engineering}, 2020.

\bibitem{ma2019parameter}
J.~Ma, S.-L. Chen, C.~S. Teo, A.~Tay, A.~Al~Mamun, and K.~K. Tan, ``Parameter
  space optimization towards integrated mechatronic design for uncertain
  systems with generalized feedback constraints,'' \emph{Automatica}, vol. 105,
  pp. 149--158, 2019.

\bibitem{ma2020optimal}
J.~Ma, Z.~Cheng, X.~Zhang, M.~Tomizuka, and T.~H. Lee, ``Optimal decentralized
  control for uncertain systems by symmetric {G}auss-{S}eidel semi-proximal
  {ALM},'' \emph{arXiv preprint arXiv:2001.00306}, 2020.

\bibitem{MPC_IFAC}
M.~Hehn and R.~D'Andrea, ``Quadrocopter trajectory generation and control,'' in
  \emph{Proceedings of IFAC World Congress}, 2011, pp. 1485--1491.

\bibitem{li2015trajectory}
Z.~Li, J.~Deng, R.~Lu, Y.~Xu, J.~Bai, and C.-Y. Su, ``Trajectory-tracking
  control of mobile robot systems incorporating neural-dynamic optimized model
  predictive approach,'' \emph{IEEE Transactions on Systems, Man, and
  Cybernetics: Systems}, vol.~46, no.~6, pp. 740--749, 2015.

\bibitem{zhou2017real}
Y.~Zhou, H.~Hu, Y.~Liu, S.-W. Lin, and Z.~Ding, ``A real-time and fully
  distributed approach to motion planning for multirobot systems,'' \emph{IEEE
  Transactions on Systems, Man, and Cybernetics: Systems}, vol.~23, no.~12, pp.
  2636--2650, 2017.

\bibitem{NMPC_ICECC}
Z.~Chao, L.~Ming, Z.~Shaolei, and Z.~Wenguang, ``Collision-free {UAV} formation
  flight control based on nonlinear {MPC},'' in \emph{Proceedings of
  International Conference on Electronics, Communications and Control}, 2011,
  pp. 1951--1956.

\bibitem{vgmm2018}
N.~Deo, A.~Rangesh, and M.~M. Trivedi, ``How would surround vehicles move? {A}
  unified framework for maneuver classification and motion prediction,''
  \emph{IEEE Transactions on Intelligent Vehicles}, vol.~3, no.~2, pp.
  129--140, 2018.

\bibitem{HMM_vehicle2016}
N.~Ye, Y.~Zhang, R.~Wang, and R.~Malekian, ``Vehicle trajectory prediction
  based on {H}idden {M}arkov {M}odel.'' \emph{KSII Transactions on Internet \&
  Information Systems}, vol.~10, no.~7, pp. 3150--3170, 2016.

\bibitem{MonteCarlo2019}
S.~Gong, J.~Cartlidge, R.~Bai, Y.~Yue, Q.~Li, and G.~Qiu, ``Extracting activity
  patterns from taxi trajectory data: A two-layer framework using
  spatio-temporal clustering, {B}ayesian probability and {M}onte {C}arlo
  simulation,'' \emph{International Journal of Geographical Information
  Science}, pp. 1--25, 2019.

\bibitem{Kalman2016}
M.~A. Mahmud, M.~S. Aman, H.~Jiang, A.~Abdelgawad, and K.~Yelamarthi, ``Kalman
  filter based indoor mobile robot navigation,'' in \emph{Proceedings of the
  International Conference on Electrical, Electronics, and Optimization
  Techniques}, 2016, pp. 1949--1953.

\bibitem{SocialLSTM_2016}
A.~Alahi, K.~Goel, V.~Ramanathan, A.~Robicquet, L.~Fei-Fei, and S.~Savarese,
  ``Social {LSTM}: Human trajectory prediction in crowded spaces,'' in
  \emph{Proceedings of IEEE Conference on Computer Vision and Pattern
  Recognition}, 2016, pp. 961--971.

\bibitem{LSTM2019}
M.~Huynh and G.~Alaghband, ``Trajectory prediction by coupling scene-{LSTM}
  with human movement {LSTM},'' in \emph{International Symposium on Visual
  Computing}, 2019, pp. 244--259.

\bibitem{assens2018pathgan}
M.~Assens, X.~Giro-i Nieto, K.~McGuinness, and N.~E. O’Connor, ``{PathGAN}:
  Visual scanpath prediction with generative adversarial networks,'' in
  \emph{Proceedings of European Conference on Computer Vision}, 2018, pp.
  406--422.

\bibitem{roy2019vehicle}
D.~Roy, T.~Ishizaka, C.~K. Mohan, and A.~Fukuda, ``Vehicle trajectory
  prediction at intersections using interaction based generative adversarial
  networks,'' in \emph{Proceedings of IEEE Intelligent Transportation Systems
  Conference}, 2019, pp. 2318--2323.

\bibitem{SophieCVPR_2019_excellent}
A.~Sadeghian, V.~Kosaraju, A.~Sadeghian, N.~Hirose, H.~Rezatofighi, and
  S.~Savarese, ``Sophie: An attentive {GAN} for predicting paths compliant to
  social and physical constraints,'' in \emph{Proceedings of IEEE Conference on
  Computer Vision and Pattern Recognition}, 2019, pp. 1349--1358.

\bibitem{vgmm2012}
J.~Wiest, M.~H{\"o}ffken, U.~Kre{\ss}el, and K.~Dietmayer, ``Probabilistic
  trajectory prediction with {Gaussian} mixture models,'' in \emph{Proceedings
  of IEEE Intelligent Vehicles Symposium}, 2012, pp. 141--146.

\bibitem{NMPC+ellipsoid_2017}
M.~Kamel, J.~Alonso-Mora, R.~Siegwart, and J.~Nieto, ``Robust collision
  avoidance for multiple micro aerial vehicles using nonlinear model predictive
  control,'' in \emph{Proceedings of IEEE/RSJ International Conference on
  Intelligent Robots and Systems}, 2017, pp. 236--243.

\bibitem{ellipsoid_2006}
R.~Pepy and A.~Lambert, ``Safe path planning in an uncertain-configuration
  space using {RRT},'' in \emph{Proceedings of IEEE/RSJ International
  Conference on Intelligent Robots and Systems}, 2006, pp. 5376--5381.

\bibitem{nasios2006variational}
N.~Nasios and A.~G. Bors, ``Variational learning for {Gaussian} mixture
  models,'' \emph{IEEE Transactions on Systems, Man, and Cybernetics, Part B
  (Cybernetics)}, vol.~36, no.~4, pp. 849--862, 2006.

\bibitem{bishop2006pattern}
C.~M. Bishop, \emph{Pattern Recognition and Machine Learning}.\hskip 1em plus
  0.5em minus 0.4em\relax London, U.K.: Springer, 2006.

\bibitem{Mahalanobis}
W.~H{\"a}rdle and L.~Simar, \emph{Applied Multivariate Statistical
  Analysis}.\hskip 1em plus 0.5em minus 0.4em\relax London, U.K.: Springer,
  2007.

\bibitem{ma2017integrated}
J.~Ma, S.-L. Chen, N.~Kamaldin, C.~S. Teo, A.~Tay, A.~Al~Mamun, and K.~K. Tan,
  ``Integrated mechatronic design in the flexure-linked dual-drive gantry by
  constrained linear--quadratic optimization,'' \emph{IEEE Transactions on
  Industrial Electronics}, vol.~65, no.~3, pp. 2408--2418, 2017.

\bibitem{ma2017novel}
------, ``A novel constrained ${H}_2$ optimization algorithm for mechatronics
  design in flexure-linked biaxial gantry,'' \emph{ISA Transactions}, vol.~71,
  pp. 467--479, 2017.

\bibitem{sastry2013nonlinear}
S.~Sastry, \emph{Nonlinear Systems: Analysis, Stability, and Control}.\hskip
  1em plus 0.5em minus 0.4em\relax New York: Springer-Verlag, 1999.

\bibitem{bemporad2002explicit}
A.~Bemporad, M.~Morari, V.~Dua, and E.~N. Pistikopoulos, ``The explicit linear
  quadratic regulator for constrained systems,'' \emph{Automatica}, vol.~38,
  no.~1, pp. 3--20, 2002.

\bibitem{multiple_shooting}
C.~Kirches, \emph{The Direct Multiple Shooting Method for Optimal
  control}.\hskip 1em plus 0.5em minus 0.4em\relax London, U.K.: Springer,
  2011.

\bibitem{kerrigan2000soft}
E.~C. Kerrigan and J.~M. Maciejowski, ``Soft constraints and exact penalty
  functions in model predictive control,'' in \emph{Proceedings of
  International Control Conference}, 2000.

\bibitem{zhang2019integrated}
X.~Zhang, J.~Ma, S.~Huang, Z.~Cheng, and T.~H. Lee, ``Integrated planning and
  control for collision-free trajectory generation in {3D} environment with
  obstacles,'' in \emph{Proceedings of International Conference on Control,
  Automation and Systems}, 2019, pp. 974--979.

\bibitem{SDE}
T.~{\c{C}}imen, ``State-dependent {Riccati} equation control: A survey,''
  \emph{Proceedings of IFAC World Congress}, vol.~17, pp. 3761--3775, 2008.

\bibitem{HLT*}
X.~Zhang, S.~Huang, W.~Liang, Z.~Cheng, K.~K. Tan, and T.~H. Lee, ``{HLT*}:
  Real-time and any-angle path planning in {3D} environment,'' in
  \emph{Proceedings of Annual Conference of the IEEE Industrial Electronics
  Society}, 2019, pp. 5231--5236.

\bibitem{borrelli2005mpc}
F.~Borrelli, P.~Falcone, T.~Keviczky, J.~Asgari, and D.~Hrovat, ``{MPC}-based
  approach to active steering for autonomous vehicle systems,''
  \emph{International Journal of Vehicle Autonomous Systems}, vol.~3, no.~2,
  pp. 265--291, 2005.

\end{thebibliography}

\end{document}